\newcommand{\CC}{\mathbb{C}}
\newcommand{\PP}{\mathbb{P}}
\newcommand{\ZZ}{\mathbb{Z}}
\newcommand{\eqn}{\begin{eqnarray}}
\newcommand{\feqn}{\end{eqnarray}}
\newtheorem{theorem}{Theorem}[section]
\newtheorem{lemma}[theorem]{Lemma}
\newcommand{\qedd}{\nobreak \ifvmode \relax \else
      \ifdim\lastskip<1.5em \hskip-\lastskip
      \hskip1.5em plus0em minus0.5em \fi \nobreak
      \vrule height0.75em width0.5em depth0.25em\fi}
\newcommand{\mb}[1]{{\mathbb #1}}
\date{\today}
\title{On the geometry of \boldmath{$\CC^3/\Delta_{27}$} and del Pezzo surfaces}
\author{Sergio L. Cacciatori\\Dipartimento di Fisica, Universit\`a degli Studi dell'Insubria,
Via Valleggio 11, 22100 Como, Italy\\E-mail address: sergio.cacciatori@uninsubria.it}
\author{Marco Compagnoni\\Dipartimento di Matematica, Politecnico di Milano,
Via Bonardi 9, 20133 Milano, Italy\\E-mail address: marco.compagnoni@polimi.it}
\abstract{We clarify some aspects of the geometry of a resolution of the orbifold $X=\CC^3/\Delta_{27}$, the noncompact complex manifold underlying the brane quiver standard model recently proposed by Verlinde and Wijnholt. We explicitly realize a map between $X$ and the total space of the canonical bundle over a degree 1 quasi del Pezzo surface, thus defining a desingularization of $X$. Our analysis relys essentially on the relationship existing between the normalizer group of $\Delta_{27}$ and the Hessian group and on the study of the behaviour of the Hesse pencil of plane cubic curves under the quotient.
}
\begin{document}

\section{Introduction}\label{sec:intro}
String theory, being on the scenes from many years now, can no more be considered a novel theory. In spite of having overtaken its initial difficulties and undergone more then a revolution leading to amazing interconnections between physics and mathematics, it has not yet provided a natural top-down physical prediction that could make of it a testable theory. The successes in revealing duality relations underlying different physical models as a consequence of AdS/CFT correspondence or the great deal of mathematical conjectures inspired by mirror symmetry, surely ascribe to string theory all its scientific dignity. However, to be accepted as a true physical theory, string theory needs to be connected to predictive models which can be tested via experiment. For this reason in the last years D-brane engineering has been developed in order to realize four dimensional supersymmetric models reproducing almost realistic gauge theories in which gravity is decoupled (see for example the recent works on F-theory phenomenology, starting from \cite{Beasley:2008dc}, \cite{Beasley:2008kw}, \cite{Donagi:2008ca}, \cite{Donagi:2008kj}).

One of these models, an explicit realization of the supersymmetric Standard Model as a world-volume theory on a D3-brane, has been proposed some time ago in a paper by H. Verlinde and M. Wijnholt \cite{Verlinde:2005jr}. It is a simple quiver extension of the minimal supersymmetric Standard Model. The geometric dual is realized starting from the non Abelian orbifold $\CC^3/\Delta_{27}$, where $\Delta_{27}$ is the Heisenberg group of order $27$ obtained by the central extension of the abelian group $\ZZ_3\times \ZZ_3$ by $\ZZ_3$. Here we will not review all the construction of the model but only recall the relationship between the orbifold $\CC^3/\Delta_{27}$ and del Pezzo surfaces of degree one.

The authors of \cite{Verlinde:2005jr} observed that the orbifold $\CC^3/\Delta_{27}$ and the tautological cone over a del Pezzo surface $d\PP_8$ have the same quiver and therefore the same fractional brane configurations. Indeed the authors conjectured that the $dP_8$ singularities should actually arise as deformations of such non Abelian orbifold. This is the main argument they put forward in support of their hypothesis. The discrete group $\Delta_{27}$ is the non-abelian subgroup of $SL(3,\CC)$ generated by two elements:
\begin{eqnarray*}
g_1=\left(
\begin{array}{ccc}
1 & 0      & 0       \\
0 & \omega & 0       \\
0 & 0      & \omega^2
\end{array}\right),
\qquad
g_2=\left(
\begin{array}{ccc}
0 & 0 & 1\\
1 & 0 & 0\\
0 & 1 & 0
\end{array}\right),
\end{eqnarray*}
with $\omega=e^{\frac{2\pi i}{3}}$. We then see that the following combinations of coordinates
\begin{eqnarray}
x &=&  X Y Z  \cr
z &=&  X^3 + Y^3 + Z^3\label{combi}\\
y &=& (X^3 + \omega Y^3 + \omega^2 Z^3)(X^3 + \omega^2 Y^3 + \omega Z^3)\cr
w &=& (X^3 + \omega Y^3 + \omega^2 Z^3)^3 \nonumber
\end{eqnarray}
are invariant under the action of $\Delta_{27}$. Actually they define a map from $\CC^3/\Delta_{27}$ to the tautological bundle over the weighted projective space $\PP_{1,1,2,3}$. The coordinates $(x,z,y,w)$ defined in (\ref{combi}) satisfy the homogeneous equation
\begin{eqnarray*}
w^2 + y^3 - 27 w x^3  + w z^3  -3 w yz = 0.
\end{eqnarray*}
Thus, they conclude that the orbifold $\CC^3/\Delta_{27}$ is isomorphic to the cone over such a singular hypersurface in $\PP_{1,1,2,3}$. The well known fact that all smooth hypersurfaces of degree six in $\PP_{1,1,2,3}$ are isomorphic to a $dP_8$ surface is a strong evidence for the conjecture.

In this article we would like to clarify the details of this problem giving the exact correspondence between the two geometries. An initial discussion has appeared in \cite{Artebani-Dolgachev} and we will start our analysis taking their result into account. We expect that the correspondence outlined here will be recognized as very helpful for further developments of the Verlinde-Wijnholt model.

As a warm up that elucidates the phylosophy behind section \ref{sec:toric}, we will focus on a very similar toric model which is $\CC^3/\ZZ_3\times \ZZ_3$. This is a cone over a singular cubic surface in $\PP^3$ possesing several resolutions, all related by flops, which can be easily determined via toric methods. A special one among them can be identified as the total space of the canonical bundle over a quasi del Pezzo surface of degree 3, that is a surface obtained by blowing up three points on $\PP^2$ twice. The interpretation of this resolution in terms of the given surface suggests a strategy which indeed works also for $X=\CC^3/\Delta_{27}$ where toric methods are no more available.

Next in section \ref{sec:delta} we analyze the non toric geometry. According to the correspondence conjectured by Verlinde and Wijnholt one expects to relate the resolution to a quasi del Pezzo surface $d\PP_8$. The main idea is as follows: \\
The initial singular variety is the cone over $\PP^2/\tilde \Gamma$, where $\tilde \Gamma$ is the quotient of $\Gamma=\Delta_{27}$ by its center, which is $\ZZ_3$. It has four degenerate orbits on $\PP^2$ with non trivial stabilizer yielding the four singular point on $\PP^2/\tilde \Gamma$. Now, a $d\PP_8$ surface is obtained by blowing up eight points in general positions (see later) on $\PP^2$. But all the cubics on $\PP^2$ passing through the eight points also have in common a ninth point. When these nine points are in a particular position, that is when they stay in a square array, then the corresponding set of cubics is called a Hesse pencil. Such an Hesse pencil contains exactly four degenerate cubics, each one being the union of three lines. Let us consider the quasi del Pezzo surface $S=Bl_{p_1,\ldots,p8} (\PP^2)$ obtained by blowing up eight of the nine base points. Unlike a full del Pezzo surface, in this case the anticanonical divisor $-K_S$ is not ample and the anticanonical maps instead of giving a projective embedding for $S$ in $\PP_{1,1,2,3}$, they shrunk to a point the components of the four singular cubics which do not pass through the ninth base point of the pencil. As a consequence the resulting singular variety can be identified with $\PP^2/\tilde \Gamma$. The anticanonical maps actually give a desingularizing map from the quasi del Pezzo surface $S$ (also called singular del Pezzo surface in \cite{Verlinde:2005jr}) to $\PP^2/\tilde \Gamma$. One arrives at this conclusion by realizing that pinching out eight base points, the Hesse pencil will substantially describe both $\PP^2$ and $\PP^2/\tilde \Gamma$ as elliptic fibrations over $\PP^1$. This proof makes use of the identification of the Hessian group with the image of the normalizer of $\Gamma$ in $PGL(3,\CC)$.

\

{\sf{This article is based on notes provided to us by professor Bert van Geemen, to whom we are indebted.}}

\section{The toric toy model}\label{sec:toric}
The simplest orbifold that admits as crepant resolution the total space of the canonical divisor over a del Pezzo surface is $\CC^3/\ZZ_3$ case. The mathematics and physics of this orbifold has been intensively studied (see for example \cite{Bouchard:2007nr} and reference herein). It is a toric variety isomorphic to the complex cone over the projective plane $\PP^2$ and its only crepant resolution is the total space of the canonical bundle over $\PP^2$, the del Pezzo surface of degree nine.

In this section we briefly study the geometry of the three dimensional orbifold $M=\CC^3/(\ZZ_3\times\ZZ_3)$. The analysis resembles that of the  next section for $\CC^3/\Delta_{27}$; on the other hand $M$ is a toric variety therefore we can use the powerful tools of toric geometry. Here we sketch the main ideas and results, referring to \cite{FultonT,Oda} for all the details of the computation.
\subsection{The orbifold $\CC^3/(\ZZ_3\times\ZZ_3)$}
The group $\ZZ_3\times\ZZ_3$ has two generators:
\begin{equation}
g_1=\left(
\begin{array}{ccc}
1 & 0      & 0       \\
0 & \omega & 0       \\
0 & 0      & \omega^2\\
\end{array}\right),
\qquad
g_2=\left(
\begin{array}{ccc}
\omega & 0      & 0     \\
0      & \omega & 0     \\
0      & 0      & \omega\\
\end{array}\right),
\end{equation}
with $\omega=e^{\frac{2\pi i}{3}}$.
It is Abelian, therefore $M$ is a toric variety. Its fan $\Delta$ is generated by the vectors
\begin{eqnarray}
v_1=\left(\begin{array}{c} -1\\1\\1 \end{array} \right) \ , \qquad
v_2=\left(\begin{array}{c} -1\\-2\\1 \end{array} \right) \ , \qquad
v_3=\left(\begin{array}{c} 2\\1\\1 \end{array} \right) \ ,
\end{eqnarray}
in $N\simeq\ZZ^3$ and in figure \ref{Fan-orb} we give the two dimensional intersection of $\Delta$ with the plane $z=1$.
\begin{figure}[hbtp]
\centering
\resizebox{5.8cm}{!}{
\includegraphics[viewport=200 500 420 720,width=5cm,clip]{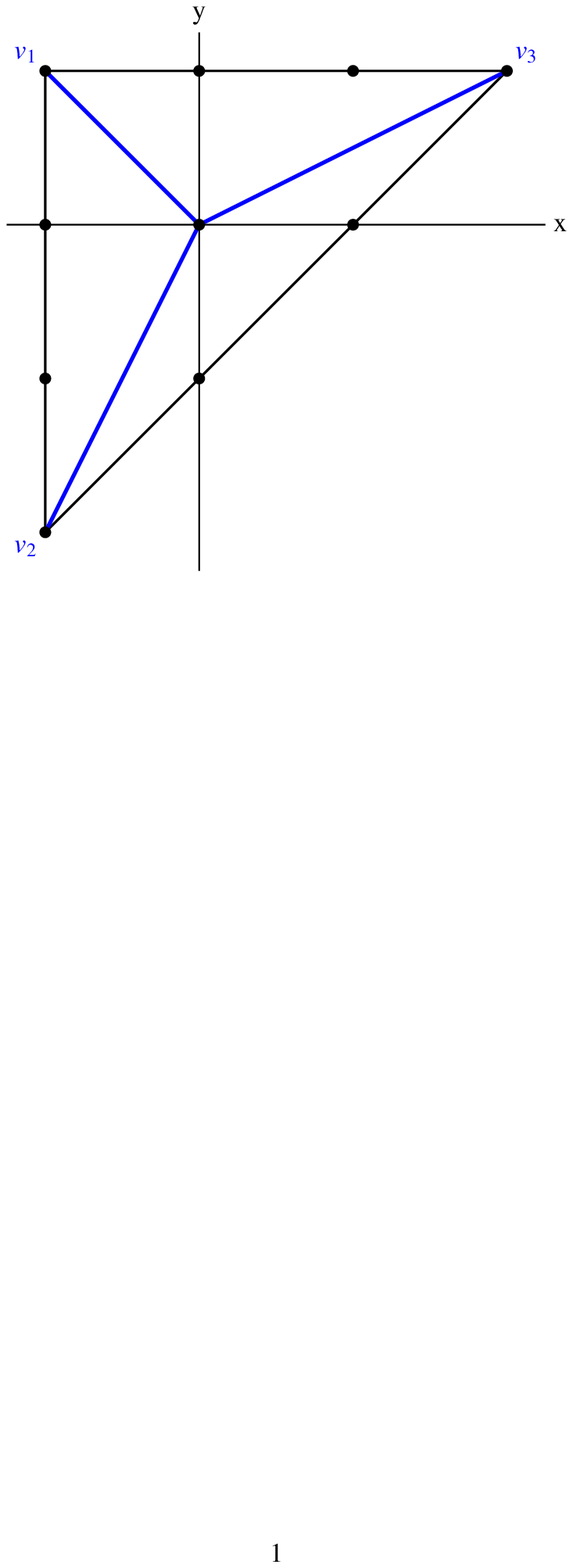}}
\caption{The fan of the orbifold $\CC^3/\ZZ_3\times \ZZ_3$. As a two dimensional fan it is the one of $\PP^2/\ZZ_3$.}
\label{Fan-orb}
\end{figure}
There are four monomials of degree three that are invariant under the action of $\ZZ_3\times\ZZ_3$:
\begin{eqnarray}
x & =  X Y Z,  \cr
y & =  X^3,    \label{orb2cubic}\\
z & =  Y^3,    \cr
w & =  Z^3.    \nonumber
\end{eqnarray}
They define a map from the orbifold $\CC^3/(\ZZ_3\times\ZZ_3)$ to the tautological bundle over the projective space $\PP^3$. Moreover the monomials satisfy the homogeneous relation
\begin{eqnarray}
x^3- y z w = 0.\label{singcubic}
\end{eqnarray}
Actually they define an isomorphism between the orbifold and the cone over the singular cubic surface defined by equation \ref{singcubic}. It is well known that any smooth cubic surface in $\PP^3$ is a $dP_6$, a del Pezzo surface of degree three isomorphic to the blown up of $\PP^2$ in six points in general position (see for example chapter 5 of \cite{Hartshorne}). Therefore the singular surface \ref{singcubic} should be related to some sort of limit in the moduli space of $dP_6$ surfaces.

$M$ is a non compact Calabi-Yau threefold with non isolated singularities. We have several crepant resolutions of such an orbifold, related by flop transitions (see \cite{noi} for a short guide to toric desingularization). We are interested in the one associated to the fan in figure \ref{Fan-orbres}.
\begin{figure}[hbtp]
\centering
\resizebox{5.8cm}{!}{
\includegraphics[viewport=200 500 420 730,width=6cm,clip]{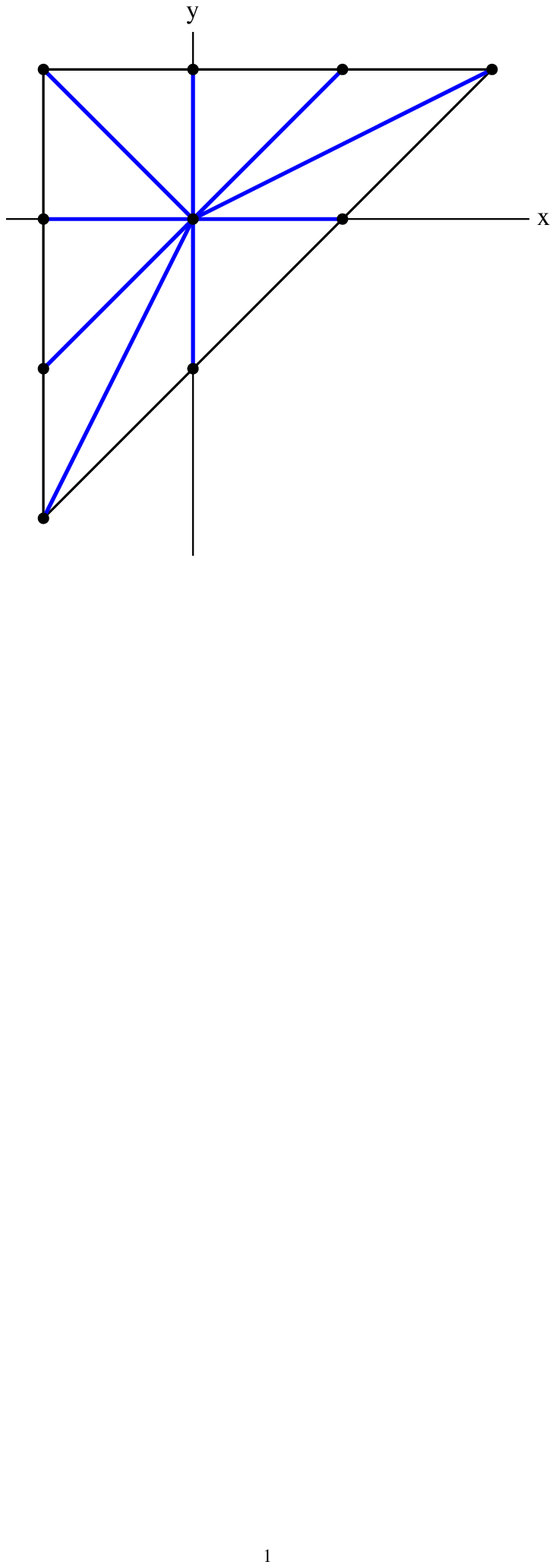}}
\caption{The fan of a crepant resolution of the orbifold $\CC^3/\ZZ_3\times \ZZ_3$. As a two dimensional fan it gives the minimal
resolution of $\PP^2/\ZZ_3$.} \label{Fan-orbres}
\end{figure}\noindent
In the next section we will study the relation between such a resolution and del Pezzo surfaces.

\subsection{The quotient $\PP^2/\ZZ_3$ and its minimal resolution}
The orbifold $M=\CC^3/(\ZZ_3\times\ZZ_3)$ is the tautological cone over the toric quotient surface $\PP^2_3=\mathbb{P}^2/\mathbb{Z}_3$, i.e. the total space of the canonical fiber bundle $K_{\PP^2_3}$ with the zero section shrunk to a point. The fan of $\PP^2_3$ is the same as in figure \ref{Fan-orb} if we think to it as a two dimensional fan.

The action of the generator $g$ of $\ZZ_3$ on the homogeneous coordinates of $\PP^2$ is:
\begin{equation}
g\cdot (z_1:z_2:z_3)=(z_1:\omega z_2:\omega^2 z_3),
\qquad \omega^3=1.
\end{equation}
There are three fixed points on $\PP^2$ at
\begin{equation}
p_1\equiv (1:0:0), \qquad p_2\equiv (0:1:0), \qquad p_3\equiv (0:0:1),
\end{equation}
the toric invariant points. Using local charts near the $p_i$, it is easy to see that the action of $g$ is given by diag$(\omega,\omega^2)$. Therefore the quotient surface has three singular points of type $A_2$. Since the orbifold $M$ is locally the product $\PP^2_3\times A^1_\CC$, in correspondence of the images of the $p_i$ we have three curves of $A_2$ singular points in $M$.
\begin{figure}[hbtp]
\centering
\resizebox{13cm}{!}{
\includegraphics[viewport=120 500 560 720,width=10cm,clip]{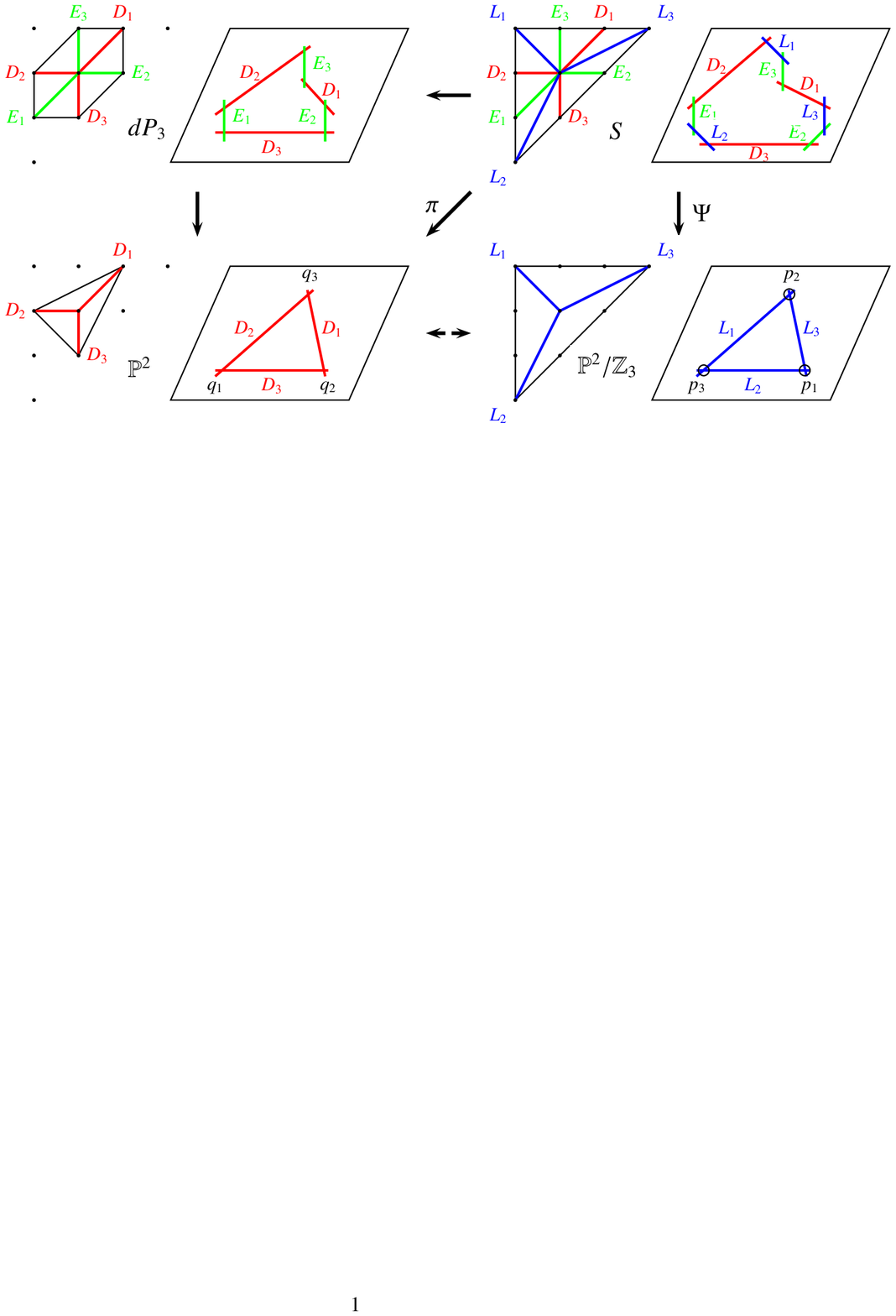}}
\caption{The minimal resolution of $\PP^2/\ZZ_3$.}
\label{Resolution}
\end{figure}\noindent
The resolution of the singularities in $\PP^2_3$ is easily performed by means of the standard toric methods through a double blow-up of the singular points. We obtain the smooth toric surface $S$ whose fan is depicted in figure \ref{Fan-orbres}. Taking the total space of the canonical bundle $K_S$ we obtain a crepant resolution of the initial orbifold $M$. The birationality of the surface $S$ with $\PP^2$ can be verified by means the blow up map $\pi$ that we depict in figure \ref{Resolution}. For simplicity we use the same name for the strict transforms of the relevant curves in the different varieties. Starting from $\PP^2$ we blow up the three toric invariant points $q_1=(1:0:0),\ q_2=(0:1:0),\ q_3=(0:0:1)$, obtaining the toric del Pezzo surface $dP_3$. The three exceptional divisors $E_i$ on $dP_3$ separate the three toric invariant lines $D_i$ on $\PP^2$ defined by $x_i=0$. Blowing up three intersections of the $E_i$ with the strict transforms of the $D_i$ in $dP_3$ we obtain the surface $S$. This is a degenerate limit in the moduli space of del Pezzo surfaces of degree three, it is obtained by blowing up $\PP^2$ in three couples of infinitely nearby points.

In $S$ we have the following intersection table:
\TABLE[h]{
\begin{tabular}{c|ccccccccc}
\phantom{x} & $L_1$ & $L_2$ & $L_3$ & $E_1$ & $D_2$ & $E_2$ & $D_3$ & $E_3$ & $D_1$ \\
\hline
  $L_1$     &   -1  &   0   &   0   &   0   &   1   &   0   &   0   &   1   &   0   \\
  $L_2$     &   0   &   -1  &   0   &   1   &   0   &   0   &   1   &   0   &   0   \\
  $L_3$     &   0   &   0   &   -1  &   0   &   0   &   1   &   0   &   0   &   1   \\
  $E_1$     &   0   &   1   &   0   &   -2  &   1   &   0   &   0   &   0   &   0   \\
  $D_2$     &   1   &   0   &   0   &   1   &   -2  &   0   &   0   &   0   &   0   \\
  $E_2$     &   0   &   0   &   1   &   0   &   0   &   -2  &   1   &   0   &   0   \\
  $D_3$     &   0   &   1   &   0   &   0   &   0   &   1   &   -2  &   0   &   0   \\
  $E_3$     &   1   &   0   &   0   &   0   &   0   &   0   &   0   &   -2  &   1   \\
  $D_1$     &   0   &   0   &   1   &   0   &   0   &   0   &   0   &   1   &   -2  \\
\end{tabular}
\caption{Intersection table of $S$}
\label{IntPairingtoy}
}\\
The intersection graph for each pair of curves $(E_i,D_{i+1})$ is $A_2$. The resolution map from $S$ to $\PP^2_3$ sends such curves to the three singular points of type $A_2$.

The map from $S$ to the singular cubic surface \ref{singcubic} is given by the anticanonical sections. The anticanonical divisor of a $dP_6$ is very ample and its sections give an embedding of the surface in $\PP^3$. The canonical divisor of the surface $S$ is $K_S=-\sum D_i+E_i+L_i$, which has zero intersection with the curves $(E_i,D_i)$. Hence in this case $-K_S$ does not satisfy the second requirement of the Nakai-Moishezon criterion (see section \ref{generalities}) and it is not ample. The anticanonical global sections are constant on the above curves, so they get blown down by the anticanonical map.

We can obtain the explicit anticanonical map using toric methods (see \cite{Oda}). The polytope in the dual lattice $M$ related to the anticanonical divisor $-K_S$ is depicted in figure \ref{AnticanonicalPolytope}.
\begin{figure}[hbtp]
\centering
\resizebox{5cm}{!}{
\includegraphics[viewport=260 620 420 790,width=5cm,clip]{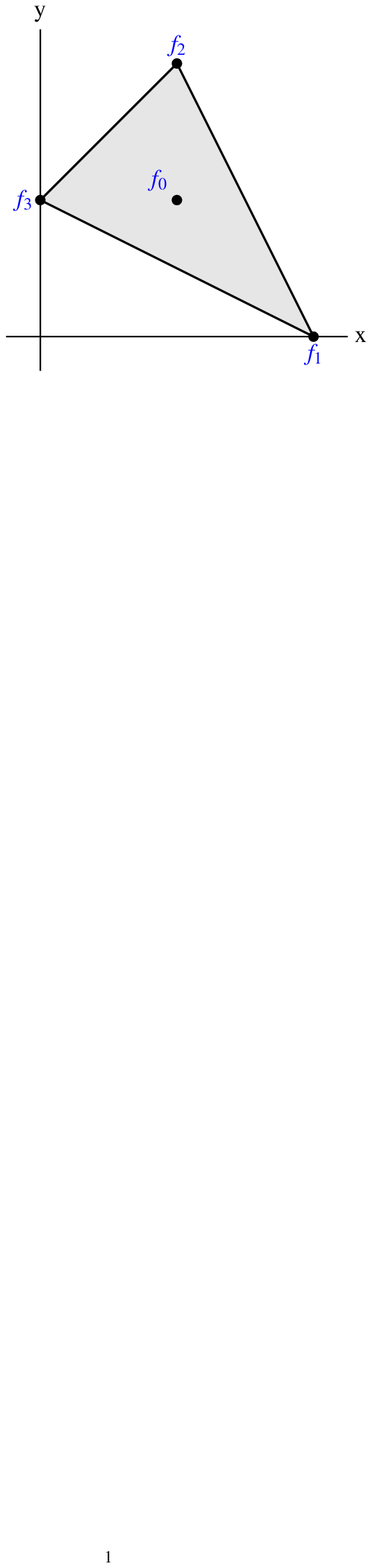}}
\caption{The polytope of $-K_S$}
\label{AnticanonicalPolytope}
\end{figure}\noindent
A basis for $H^0(S,-K_S)$ is given by the four monomials in the polytope:
\begin{align}
f_0 & =  X Y,   \cr
f_1 & =  X^2,   \label{S2cubic}\\
f_2 & =  X Y^2, \cr
f_3 & =  Y.     \nonumber
\end{align}
They define the equivariant map:
\begin{eqnarray}\label{psiregular33}
\Psi: \quad & S & \rightarrow \PP^3 \\
            & q & \mapsto (f_0(q):f_1(q):f_2(q):f_3(q)) \nonumber
\end{eqnarray}
The image of $\Psi$ is the singular cubic surface \ref{singcubic}:
\begin{eqnarray}
x^3- y z w = 0.\label{singcubic2}
\end{eqnarray}
This is a strictly semistable cubic surface with three double rational points of type $A_2$, which arises naturally in the context of the complex ball uniformization of the moduli space of cubic surfaces, when boundary points are included (see \cite{Dolgachev:1}, section 9 and \cite{Allcock}).

\subsection{Conclusion}
The singular orbifold $M=\CC^3/(\ZZ_3\times\ZZ_3)$ is the tautological cone over the singular quotient surface $\PP^2_3=\PP^2/\ZZ_3$. The threefold $M$ has three curves of singular points of $A_2$ type with common intersection at the origin. The quotient $\PP^2_3$ is isomorphic to a singular cubic surface in $\PP^3$ and it admits the quasi del Pezzo surface $S$ as its minimal resolution with anticanonical desingularization map $\Psi$. We have the following commutative diagram:\\[-1mm]

\hspace{5.4cm}
\xymatrix{
\PP^2 \ar@{->}[d]
&S \ar[l]_{\ \pi} \ar@{->}[ld]^\Psi\\
\PP^2/\tilde\Gamma}\\[3mm]
The total space of the canonical bundle $K_S$ and the map $\Psi$ give a crepant resolution of the starting orbifold $M$.

\section{The geometry of $\CC^3/\Gamma$}\label{sec:delta}
In this section we study the relation between the non Abelian orbifold $\CC^3/\Gamma$ and del Pezzo surfaces, obtaining a very similar structure to the one of the preceding section.

\subsection{The Heisenberg group $\Gamma$ and its action on $\CC^3$}
\label{c3/d27}
Here we will look at the geometry of the singular quotient $\CC^3/\Gamma$.
\subsubsection{The Heisenberg group $\Gamma$}
The matrices group $\Gamma:=\Delta_{27}\subset SL(3,\CC)$ has two generators:
\begin{equation}
g_1=\left(
\begin{array}{ccc}
1 & 0      & 0       \\
0 & \omega & 0       \\
0 & 0      & \omega^2\\
\end{array}\right),
\qquad
g_2=\left(
\begin{array}{ccc}
0 & 0 & 1\\
1 & 0 & 0\\
0 & 1 & 0\\
\end{array}\right),
\end{equation}
with $\omega=e^{\frac{2\pi i}{3}}$. They satisfy the relations:
\begin{equation}
g_1^3=g_2^3=I,\qquad g_2 g_1 = \omega^2 g_1 g_2.
\end{equation}
The center of $\Gamma$, its maximal abelian subgroup, is
\begin{equation}
C=\{I,\omega I,\omega^2 I\}.
\end{equation}
The abelianization of $\Gamma$ is $\tilde\Gamma:=\Gamma/C\simeq\ZZ_3\times\ZZ_3$. Thus $\Gamma$ is the Heisenberg group of order $27$, i.e. the non abelian central extension of the group $\ZZ_3\times\ZZ_3$ by $\ZZ_3$.

\subsubsection{The quotient $\PP^2/\tilde\Gamma$ and its relation with $\CC^3/\Gamma$}
The group $\tilde\Gamma$ is also the image of $\Gamma$ in $PGL(3,\CC)$, the group of automorphisms of the projective plane. It has a natural action on $\PP^2$ which is strictly related to the one of $\Gamma$ on $\CC^3$. $\tilde\Gamma$ has four cyclic subgroup isomorphic to $\ZZ_3$:
\begin{align}
<g_1>&=\{g_1,g_1^2,I\}
&
g_1&=\left(
\begin{array}{ccc}
1 & 0      & 0       \\
0 & \omega & 0       \\
0 & 0      & \omega^2\\
\end{array}\right),\cr
<g_2>&=\{g_2,g_2^2,I\}
&
g_2&=\left(
\begin{array}{ccc}
0 & 0 & 1\\
1 & 0 & 0\\
0 & 1 & 0\\
\end{array}\right),\\
<g_1g_2>&=\{g_1g_2,g_1^2g_2^2,I\}
&
g_1g_2&=\left(
\begin{array}{ccc}
0      & 0        & 1 \\
\omega & 0        & 0 \\
0      & \omega^2 & 0 \\
\end{array}\right),\cr
<g_1^2g_2>&=\{g_1^2g_2,g_1g_2^2,I\}
&
g_1^2g_2&=\left(
\begin{array}{ccc}
0        & 0      & 1 \\
\omega^2 & 0      & 0 \\
0        & \omega & 0 \\
\end{array}\right).\nonumber
\end{align}
The elements of each subgroup have common eigenspaces in $\CC^3$:
\begin{align}
g_1\cdot \left(
\begin{array}{c}
1 \\ 0 \\ 0 \\
\end{array}\right)
&=\left(
\begin{array}{c}
1 \\ 0 \\ 0 \\
\end{array}\right),
&
g_1\cdot \left(
\begin{array}{c}
0 \\ 1 \\ 0 \\
\end{array}\right)
&=\omega\left(
\begin{array}{c}
0 \\ 1 \\ 0 \\
\end{array}\right),
&
g_1\cdot\left(
\begin{array}{c}
0 \\ 0 \\ 1 \\
\end{array}\right)
&=\omega^2\left(
\begin{array}{c}
0 \\ 0 \\ 1 \\
\end{array}\right),\cr
g_2\cdot\left(
\begin{array}{c}
1 \\ 1 \\ 1 \\
\end{array}\right)
&=\left(
\begin{array}{c}
1 \\ 1 \\ 1 \\
\end{array}\right),
&
g_2\cdot\left(
\begin{array}{c}
1 \\ \omega^2 \\ \omega \\
\end{array}\right)
&=\omega\left(
\begin{array}{c}
1 \\ \omega^2 \\ \omega \\
\end{array}\right),
&
g_2\cdot\left(
\begin{array}{c}
1 \\ \omega \\ \omega^2 \\
\end{array}\right)
&=\omega^2\left(
\begin{array}{c}
1 \\ \omega \\ \omega^2 \\
\end{array}\right),\\
g_1g_2\cdot\left(
\begin{array}{c}
1 \\ \omega \\ 1 \\
\end{array}\right)
&=\left(
\begin{array}{c}
1 \\ \omega \\ 1 \\
\end{array}\right),
&
g_1g_2\cdot\left(
\begin{array}{c}
1 \\ 1 \\ \omega \\
\end{array}\right)
&=\omega\left(
\begin{array}{c}
1 \\ 1 \\ \omega \\
\end{array}\right),
&
g_1g_2\cdot\left(
\begin{array}{c}
\omega \\ 1 \\ 1 \\
\end{array}\right)
&=\omega^2\left(
\begin{array}{c}
\omega \\ 1 \\ 1 \\
\end{array}\right),\cr
g_1^2g_2\cdot\left(
\begin{array}{c}
1 \\ \omega^2 \\ 1 \\
\end{array}\right)
&=\left(
\begin{array}{c}
1 \\ \omega^2 \\ 1 \\
\end{array}\right),
&
g_1^2g_2\cdot\left(
\begin{array}{c}
\omega^2 \\ 1 \\ 1 \\
\end{array}\right)
&=\omega\left(
\begin{array}{c}
\omega^2 \\ 1 \\ 1 \\
\end{array}\right),
&
g_1^2g_2\cdot\left(
\begin{array}{c}
1 \\ 1 \\ \omega^2 \\
\end{array}\right)
&=\omega^2\left(
\begin{array}{c}
1 \\ 1 \\ \omega^2 \\
\end{array}\right).\nonumber
\end{align}
To any eigenspace it corresponds a fixed point in $\PP^2$ of the relative cyclic subgroup of $\tilde\Gamma$:
\begin{equation}
\label{degorb}
\begin{split}
\text{Fix}(<g_1>):      & \quad(1:0:0),        \quad (0:1:0),            \quad (0:0:1),\\
\text{Fix}(<g_2>):      & \quad(1:1:1),        \quad(1:\omega^2:\omega), \quad(1:\omega:\omega^2),\\
\text{Fix}(<g_1g_2>):   & \quad(1:\omega:1),   \quad(1:1:\omega),        \quad(\omega:1:1),\\
\text{Fix}(<g_1^2g_2>): & \quad(1:\omega^2:1), \quad(\omega^2:1:1),      \quad(1:1:\omega^2).
\end{split}
\end{equation}
Any set of fixed points \ref{degorb} of a given cyclic subgroup supports a degenerate orbit of order three of $\tilde\Gamma$ in $\PP^2$. These are the only orbits in the projective plane with non-trivial stabilizer in $\tilde\Gamma$. Locally at each fixed point the action of the stabilizer subgroup is given by $\text{diag}(\omega,\omega^2)$. This is very easy to see for the fixed points of $g_1$. For example near $(0:0:1)$ we can use the local chart:
\begin{eqnarray}
\PP^2   &&\longrightarrow \CC^2\\
(X:Y:Z) &&\longmapsto     \left(u=\frac{X}{Z},v=\frac{Y}{Z}\right).\nonumber
\end{eqnarray}
Thus the local action of $g_1$ is
\begin{equation}
g_1\cdot(u,v)=(\omega\, u,\omega^2\, v).
\end{equation}
It follows that the singular locus of the quotient surface $\PP^2/\tilde\Gamma$ consists of four singular points of type $A_2$, i.e. near the singular points the surface is isomorphic to the orbifold $\CC^2/\ZZ_3$.

The original orbifold $\CC^3/\Gamma$ is the cone over the singular surface $\PP^2/\tilde\Gamma$ and in particular the four singular points correspond to four lines passing through the origin, which in parametric form are:
\begin{equation}
\begin{split}
&\mathcal{C}_1: \qquad (t,0,0)\sim(0,t,0)\sim(0,0,t),\\
&\mathcal{C}_2: \qquad
(t,t,t)\sim(t,\omega^2\,t,\omega\,t)\sim(t,\omega\,t,\omega^2\,t),\\
&\mathcal{C}_3:\qquad
(t,\omega\,t,t)\sim(t,t,\omega\,t)\sim(\omega\,t,t,t),\\
&\mathcal{C}_4:\qquad
(t,\omega^2\,t,t)\sim(\omega^2\,t,t,t)\sim(t,t,\omega^2\,t).
\end{split}
\end{equation}

\subsubsection{The Verlinde-Wijnholt isomorphism}
In this section we prove that the map \ref{combi} is an isomorphism of algebraic varieties.

\begin{theorem}
The ring homomorphism associated to \ref{combi}
\begin{eqnarray}
\phi^*:\quad \CC[x,z,y,w] && \longrightarrow \quad \CC[X,Y,Z]^\Gamma\\
             (x,z,y,w)    && \longrightarrow \quad (f_0,f_1,f_2,f_3)\nonumber
\end{eqnarray}
with
\begin{eqnarray}
&& f_0  =  X Y Z,   \cr
&& f_1  =  X^3 + Y^3 + Z^3,   \\
&& f_2  =  (X^3 +\omega Y^3 +\omega^2 Z^3)(X^3 +\omega^2 Y^3 + \omega Z^3), \cr
&& f_3  =  (X^3 + \omega Y^3 + \omega^2 Z^3)^3,     \nonumber
\end{eqnarray}
defines a ring isomorphism:
\begin{equation}
\CC[X,Y,Z]^\Gamma\simeq\frac{\CC[x,z,y,w]}{(w^2+y^3-27wx^3-3wyz+wz^3)}.
\end{equation}
\end{theorem}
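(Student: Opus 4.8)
The plan is to prove the two halves of an isomorphism separately: that $\phi^*$ is \emph{surjective} onto $\CC[X,Y,Z]^\Gamma$, and that its kernel is \emph{exactly} the principal ideal generated by $F:=w^2+y^3-27wx^3-3wyz+wz^3$. The routine first step is to check the inclusion $(F)\subseteq\ker\phi^*$. Writing $p=X^3+Y^3+Z^3$, $q=X^3+\omega Y^3+\omega^2Z^3$, $r=X^3+\omega^2Y^3+\omega Z^3$, so that $f_1=p$, $f_2=qr$, $f_3=q^3$, one collects terms to get $\phi^*(F)=q^3\bigl(p^3+q^3+r^3-3pqr-27(XYZ)^3\bigr)$. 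Using the factorization $p^3+q^3+r^3-3pqr=(p+q+r)(p+\omega q+\omega^2r)(p+\omega^2q+\omega r)$ together with the elementary identities $p+q+r=3X^3$, $p+\omega q+\omega^2 r=3Z^3$, $p+\omega^2 q+\omega r=3Y^3$, one finds $p^3+q^3+r^3-3pqr=27X^3Y^3Z^3=27(XYZ)^3$, hence $\phi^*(F)=0$.

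Next I would establish surjectivity by factoring the quotient through the normal chain $H\triangleleft\Gamma$, where $H$ is the order-$9$ subgroup of diagonal matrices of $\Gamma$ and $\Gamma/H\simeq\ZZ_3$ is generated by the image of $g_2$. A short monomial count (a monomial $X^iY^jZ^k$ is $H$-invariant iff $i\equiv j\equiv k \bmod 3$) gives $\CC[X,Y,Z]^H=\CC[X^3,Y^3,Z^3,XYZ]$, subject only to $(XYZ)^3=X^3Y^3Z^3$. Setting $a=X^3,\ b=Y^3,\ c=Z^3,\ d=XYZ$, the residual $\ZZ_3$ cyclically permutes $a,b,c$ and fixes $d$, so $\CC[X,Y,Z]^\Gamma=\bigl(\CC[a,b,c]^{\ZZ_3}\bigr)[d]/(d^3-abc)$. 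The cyclic invariant ring $\CC[a,b,c]^{\ZZ_3}$ is generated by the symmetric functions $e_1,e_2,e_3$ and one genuinely cyclic invariant, which I take to be $q^3$ (note $q^3-r^3$ is a nonzero multiple of $(a-b)(b-c)(c-a)$, while $q^3+r^3$ is symmetric). Since $f_1=e_1$, $f_2=e_1^2-3e_2$, $f_3=q^3$, $f_0=d$, and $e_3=abc=d^3=f_0^3$, every generator is a polynomial in the $f_i$, which proves surjectivity. As an independent confirmation one can compute the Molien series of $\CC[X,Y,Z]^\Gamma$ to be $\sum_{m\ge0}\tfrac{m^2+m+2}{2}\,t^{3m}$, using that every noncentral element has eigenvalues $\{1,\omega,\omega^2\}$.

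To pin the kernel down to $(F)$ I would first show $F$ is irreducible in the weighted ring $\CC[x,z,y,w]$ (weights $3,3,6,9$). Viewed as a monic quadratic in $w$, reducibility would force its discriminant $(z^3-3yz-27x^3)^2-4y^3$ to be a square in $\CC[x,y,z]$; specializing $x=0$ gives $z^6-6yz^4+9y^2z^2-4y^3$, which is not a square because no square of a weight-$9$ polynomial in $y,z$ contains a $y^3$ term. Hence $F$ is irreducible and $R:=\CC[x,z,y,w]/(F)$ is an affine domain of Krull dimension $3$. Since $\Gamma$ is finite, $\CC[X,Y,Z]$ is integral over $A:=\CC[X,Y,Z]^\Gamma$, so $\dim A=3$ as well. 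By the first step $\phi^*$ descends to a graded surjection $\bar\phi^*\colon R\to A$; as $A$ is a domain, $\ker\bar\phi^*$ is a prime of the $3$-dimensional affine domain $R$ with $R/\ker\bar\phi^*\cong A$ again of dimension $3$. Because affine domains over $\CC$ are catenary and equidimensional, a nonzero prime strictly lowers the dimension, so $\ker\bar\phi^*=0$; equivalently $\ker\phi^*=(F)$, and $\bar\phi^*$ is an isomorphism.

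The genuine content, and the step I expect to be the main obstacle, is the surjectivity/generation statement, since that is where the invariant theory of $\Gamma$ (via the chain $H\triangleleft\Gamma$, or equivalently the Molien-series bookkeeping) really enters; by contrast the irreducibility of $F$ and the dimension count are short, and they are exactly what upgrades surjectivity to an isomorphism. A pleasant cross-check that simultaneously secures both surjectivity and injectivity is that the Hilbert series of $R$,
\[
\frac{1-t^{18}}{(1-t^3)^2(1-t^6)(1-t^9)}=\frac{1-s+s^2}{(1-s)^3}\quad(s=t^3),
\]
has coefficients $\tfrac{m^2+m+2}{2}$ that agree term by term with the Molien series of $A$.
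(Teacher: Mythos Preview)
Your argument is correct, and it reaches the isomorphism by a genuinely different route from the paper.

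The paper proves the theorem essentially in the opposite order. It first establishes injectivity of $\bar\phi^*$ by a geometric argument: Lemma~\ref{Lemma1} shows that the map $(X,Y,Z)\mapsto(f_0,f_1,f_2)$ from $\CC^3$ to $\CC^3$ is surjective, so $\CC[x,z,y]$ injects into $\CC[X,Y,Z]^\Gamma$; then the image of $\CC^3/\Gamma$ in $\CC^4$ is a three-dimensional closed subvariety contained in the irreducible hypersurface $F=0$, hence equal to it, which forces $\ker\phi^*=(F)$. Surjectivity is then obtained purely by dimension count: Lemmas~\ref{Lemma2} and~\ref{Lemma3} compute the graded dimensions of $R=\CC[x,z,y,w]/(F)$ and of $\CC[X,Y,Z]^\Gamma$ (the latter via the Molien/character formula) and find them equal, so the injective graded map is automatically surjective.

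You instead prove surjectivity constructively, factoring through the normal diagonal subgroup $H$ of order~$9$ to get $\CC[X,Y,Z]^H=\CC[a,b,c,d]/(d^3-abc)$ and then taking cyclic $\ZZ_3$-invariants in $a,b,c$; this exhibits explicit generators and shows directly that $f_0,f_1,f_2,f_3$ generate the invariant ring. Your injectivity step replaces the paper's geometric Lemma~\ref{Lemma1} by the discriminant check that $F$ is irreducible, and then a clean Krull-dimension argument. The Hilbert/Molien series match that is the \emph{engine} of the paper's proof appears in yours only as a consistency check. Your approach has the advantage of giving an explicit set of algebra generators for $\CC[X,Y,Z]^\Gamma$ (and of isolating the invariant theory in the tower $H\triangleleft\Gamma$), while the paper's approach avoids any structure theory of the invariant ring at the cost of two generating-function computations.
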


The proof of this theorem uses the following lemmas.

\begin{lemma}
\label{Lemma1}
The morphism
\begin{eqnarray}
\phi_R: \quad\qquad \CC^3 && \longrightarrow \quad\CC^3\\
                  (X,Y,Z) && \longmapsto\quad(f_0,f_1,f_2)\nonumber
\end{eqnarray}
is surjective.
\end{lemma}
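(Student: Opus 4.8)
The plan is to factor $\phi_R$ through the cube-power map and thereby reduce surjectivity to an elementary computation with symmetric functions. First I would set $a=X^3$, $b=Y^3$, $c=Z^3$ and rewrite the three components in terms of $a,b,c$. A short expansion using $\omega+\omega^2=-1$ and $\omega^3=1$ (equivalently, the identity $(a+\omega b+\omega^2 c)(a+\omega^2 b+\omega c)=a^2+b^2+c^2-ab-bc-ca$) gives
\begin{equation*}
f_0^3 = abc,\qquad f_1 = a+b+c,\qquad f_2 = a^2+b^2+c^2-ab-bc-ca,
\end{equation*}
so that, writing $e_1,e_2,e_3$ for the elementary symmetric polynomials of $a,b,c$, one has $f_1=e_1$, $f_2=e_1^2-3e_2$ and $f_0^3=e_3$.

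Now fix an arbitrary target $(s_0,s_1,s_2)\in\CC^3$. I would first solve for the symmetric functions by setting
\begin{equation*}
e_1=s_1,\qquad e_2=\frac{s_1^2-s_2}{3},\qquad e_3=s_0^3,
\end{equation*}
and then let $a,b,c$ be the three roots (with multiplicity) of the cubic $T^3-e_1T^2+e_2T-e_3$, which exist because $\CC$ is algebraically closed. By construction these roots reproduce the prescribed values of $f_1$ and $f_2$, and $abc=s_0^3$.

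It remains to extract cube roots consistent with the constraint $XYZ=s_0$, and this is the only genuinely delicate point. Choosing arbitrary cube roots $X,Y,Z$ of $a,b,c$ gives $(XYZ)^3=abc=s_0^3$, hence $XYZ=\zeta\, s_0$ for some cube root of unity $\zeta$. The key observation is that rescaling $X\mapsto \zeta^{-1}X$ leaves $X^3=a$ unchanged (since $\zeta^3=1$), while correcting the product to $XYZ=s_0$; thus $f_0$ also attains the prescribed value. In the degenerate case $s_0=0$ we have $e_3=0$, so one of $a,b,c$ vanishes and the corresponding coordinate may be taken to be $0$, automatically giving $XYZ=0=s_0$. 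In either case the resulting $(X,Y,Z)$ maps to $(s_0,s_1,s_2)$, proving surjectivity.

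The main obstacle is precisely this matching of the product $XYZ$ to $s_0$ itself rather than merely to a cube root of $s_0^3$; it is resolved by the rescaling trick above, which exploits that multiplication by a cube root of unity is invisible to the cube. Everything else is the standard fact that prescribing the elementary symmetric functions determines an unordered triple $\{a,b,c\}$ over an algebraically closed field, together with the existence of cube roots in $\CC$.
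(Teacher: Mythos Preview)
Your proof is correct and is essentially the paper's argument recast in the language of elementary symmetric functions: the paper's degree-nine equation in $Y$ is precisely your cubic $T^3-e_1T^2+e_2T-e_3$ with $T=Y^3$, its quadratic in $Z^3$ picks out the remaining two roots, and its prescription $X=x/(YZ)$ plays the role of your rescaling trick to force $XYZ=s_0$ on the nose. Your symmetric-function framing makes the structure more transparent, but the underlying mechanism is identical.
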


\begin{proof}
Any point with coordinates $(x,z,y)$ having $x\neq 0$ has preimages with coordinates $(X,Y,Z)$ the solutions of the following algebraic equations:
\begin{eqnarray}
&&Y^9-z\;Y^6+\frac{z^2-y}{3}\;Y^3-x^3=0,\cr
&&Z^6+(Y^3-z)\;Z^3+\left(Y^6-zY^3+\frac{z^2-y}{3}\right)=0,\\
&&X-\frac{x}{YZ}=0.\nonumber
\end{eqnarray}
The case $x=0$ is similar.
\end{proof}

\begin{lemma}
\label{Lemma2}
The graded ring
\begin{equation}
\frac{\CC[x,z,y,w]}{(w^2+y^3-27wx^3-3wyz+wz^3)}
\end{equation}
with gradation
\begin{equation}
deg(x^a z^b y^c w^d):=3a+3b+6c+9d
\end{equation}
is the direct sum of vector spaces of dimension
\begin{equation}
\text{dim}\left(\frac{\CC[x,z,y,w]}{(w^2+y^3-27wx^3-3wyz+wz^3)}\right)_n=
\left\{
\begin{array}{cc}
0 & \text{if}\ n \neq 0\ \text{mod}\ 3,\\
1+\frac{d(d+1)}{2} & \text{if}\ n=3d.
\end{array}
\right.
\end{equation}
\end{lemma}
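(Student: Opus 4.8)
The plan is to compute the dimension of each graded piece directly, using the hypersurface relation to reduce every monomial to a normal form. First I would observe that the relation $w^2+y^3-27wx^3-3wyz+wz^3=0$ lets me solve for $w^2$ in terms of monomials of lower $w$-degree, so that in the quotient ring every element has a unique representative that is at most linear in $w$, i.e.\ of the form $P(x,z,y)+w\,Q(x,z,y)$ with $P,Q\in\CC[x,z,y]$. This is the key structural reduction: as a $\CC[x,z,y]$-module the quotient is free of rank two, generated by $1$ and $w$. I should verify that the relation is indeed monic in $w^2$ (the coefficient of $w^2$ is $1$), which guarantees the reduction terminates and the representative is unique.

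Next I would set up the grading bookkeeping. With $\deg x=\deg z=3$, $\deg y=6$, $\deg w=9$, the relation is homogeneous of degree $18$, so the quotient inherits a well-defined grading. Since every generator has degree divisible by $3$, the piece in degree $n$ vanishes unless $3\mid n$, which already gives the first case. For $n=3d$, the dimension is the number of normal-form monomials of that degree, namely the count of $x^az^by^c$ with $3a+3b+6c=3d$ (the $w^0$ part) plus the count of $x^az^by^c$ with $3a+3b+6c=3d-9$ (the $w^1$ part, since $w$ contributes $9$). Dividing through by $3$, I would count lattice points: solutions of $a+b+2c=d$ and solutions of $a+b+2c=d-3$ in nonnegative integers.

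The counting itself is elementary: the number of nonnegative solutions of $a+b+2c=m$ is obtained by summing over $c$ from $0$ to $\lfloor m/2\rfloor$ the number of $(a,b)$ with $a+b=m-2c$, which is $m-2c+1$; this evaluates to a standard quadratic in $m$. I would compute $N(d)+N(d-3)$, where $N(m)$ is this count, and simplify to obtain $1+\frac{d(d+1)}{2}$, treating the small cases $d=0,1,2$ (where the shifted count $N(d-3)$ is zero) separately to confirm the formula holds there as well. I expect the main obstacle to be purely organizational rather than conceptual: making sure the normal-form claim is airtight, i.e.\ that no further relations appear in fixed degree beyond the single hypersurface relation, which amounts to noting that $\CC[x,z,y,w]$ modulo the principal ideal generated by an irreducible polynomial monic in $w^2$ is exactly the rank-two free module described above, with no hidden syzygies. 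Once the module structure is pinned down, the dimension count is a direct lattice-point enumeration and the two cases of the formula follow by addition.
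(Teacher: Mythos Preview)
Your proposal is correct and follows essentially the same approach as the paper: both arguments hinge on the observation that the quotient is free of rank two over $\CC[x,z,y]$ with basis $\{1,w\}$, reducing the dimension count to $\dim(\CC[x,z,y])_n+\dim(\CC[x,z,y])_{n-9}$. The only difference is in how the monomials in $\CC[x,z,y]$ are counted---the paper packages this via the Hilbert series $\frac{1}{(1-t^3)^2(1-t^6)}$ and manipulates generating functions to extract the coefficient $1+\frac{d(d+1)}{2}$, whereas you propose to enumerate the lattice points $a+b+2c=m$ directly; both are standard and yield the same result.
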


\begin{proof}
We first observe that
\begin{equation}
\label{sommadiretta}
\frac{\CC[x,z,y,w]}{(w^2+y^3-27wx^3-3wyz+wz^3)}\simeq
\CC[x,z,y]\oplus\CC[x,z,y]w,
\end{equation}
thus as vector space it admits the monomial basis
\begin{equation}
x^a z^b y^c w^d \qquad (a,b,c)\in \ZZ_{\geq 0}^3 \quad d\in\{0,1\}.
\end{equation}
{F}rom \ref{sommadiretta} it follows that
\begin{equation}
\text{dim}\left(\frac{\CC[x,z,y,w]}{(w^2+y^3-27wx^3-3wyz+wz^3)}\right)_n=
\text{dim}\left(\CC[x,z,y]\right)_n+\text{dim}\left(\CC[x,z,y]\right)_{n-9}.
\end{equation}
We have to calculate $\text{dim}\left(\CC[x,z,y]\right)_n$, that is the number of monomials $x^a z^b y^c$ with $3a+3b+6c=n$. Note that
\begin{eqnarray}
&& \frac{1}{1-x^3t^3}\frac{1}{1-z^3t^3}\frac{1}{1-y^6t^6}\cr
&& \qquad\qquad\ = (1+x^3t^3+\ldots+x^{3a}t^{3a}+\ldots)(1+\ldots+z^{3b}t^{3b}+\ldots)
    (1+\ldots+y^{6c}t^{6c}+\ldots)\cr
&& \qquad\qquad\ = \sum_{a,b,c\geq 0}x^{3a}z^{3b}y^{6c}t^{3a+3b+6c}
 = \sum_{n=0}^{\infty}\left(\sum_{3a+3b+6c=n}x^{3a}z^{3b}y^{6c}\right)
    t^n.
\end{eqnarray}
Putting $x=z=y=1$ we get the generating function for $\text{dim}\left(\CC[x,z,y]\right)_n$:
\begin{equation}
\sum_{n=0}^{\infty}\text{dim}\left(\CC[x,z,y]\right)_n t^n =
\frac{1}{1-t^3}\frac{1}{1-t^3}\frac{1}{1-t^6} =
\frac{1}{(1-t^3)^3}\frac{1}{1+t^3}.
\end{equation}
Thus
\begin{eqnarray}
\sum_{n=0}^{\infty}
\text{dim}\left(\frac{\CC[x,z,y,w]}{(w^2+y^3-27wx^3-3wyz+wz^3)}\right)_n t^n
& =& \sum_{n=0}^{\infty}\text{dim}\left(\CC[x,z,y]\right)_n t^n +
                       \text{dim}\left(\CC[x,z,y]\right)_{n-9} t^n\cr
& =& \frac{1}{(1-t^3)^3}\frac{1}{1+t^3}+\frac{t^9}{(1-t^3)^3}\frac{1}{1+t^3}\cr
& =& \frac{t^6-t^3+1}{(1-t^3)^3}\cr
& =& (t^6-t^3+1)\left(\sum_{n=0}^{\infty}\left(^{n+2}_{2}\right)t^{3n}\right)\cr
& =& \sum_{n=0}^{\infty}\left(\left(^{n+2}_{2}\right)-\left(^{n+1}_{2}\right)
    +\left(^{n}_{2}\right)\right)t^{3n}\cr
& =& \sum_{n=0}^{\infty}\left(1+\frac{n(n+1)}{2}\right)t^{3n},
\end{eqnarray}
where the relation $\frac{1}{(1-t^3)^3}=\sum_{n=0}^{\infty}\left(^{n+2}_{2}\right) \,t^{3n}$ can be obtained by differentiating $\frac{1}{1-t}=\sum_{n=0}^{\infty}t^{n}$ twice. The statement follows.
\end{proof}

\begin{lemma}
\label{Lemma3}
The graded ring $\CC[X,Y,Z]^{\Gamma}$ of invariant polynomials under the action of $\Gamma$, with gradation
\begin{equation}
deg(X^A Y^B Z^C):=A+B+C\ ,
\end{equation}
is the direct sum of vector spaces of dimension
\begin{equation}
\text{dim}\left(\CC[X,Y,Z]^{\Gamma}\right)_n=
\left\{
\begin{array}{cc}
0 & \text{if}\ n \neq 0\ \text{mod}\ 3,\\
1+\frac{d(d+1)}{2} & \text{if}\ n=3d.
\end{array}
\right.
\end{equation}
\end{lemma}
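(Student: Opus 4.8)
The plan is to compute the full Hilbert series of $\CC[X,Y,Z]^{\Gamma}$ in one stroke by Molien's formula, which for a finite subgroup $G\subset GL(3,\CC)$ reads
\[
\sum_{n\geq 0}\text{dim}\left(\CC[X,Y,Z]^{G}\right)_n\,t^n=\frac{1}{|G|}\sum_{g\in G}\frac{1}{\det(I-t\,g)}.
\]
The total-degree grading $\deg(X^AY^BZ^C)=A+B+C$ is exactly the one tracked by $\det(I-t\,g)$, so this is the right bookkeeping device. Since $|G|=27$, everything reduces to knowing the eigenvalues of each of the $27$ group elements, and in particular their characteristic polynomials.

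The key step, and the only one requiring genuine care, is the following eigenvalue dichotomy. From the relations $g_1^3=g_2^3=I$ and $g_2g_1=\omega^2 g_1g_2$ one checks that $(g_1^ag_2^b)^3=I$ and hence that $\Gamma$ has exponent $3$: every non-identity element has order $3$. Now split the $27$ elements into the three central ones and the remaining $24$. The central elements $I,\omega I,\omega^2 I$ are scalar, contributing $\det(I-tg)$ equal to $(1-t)^3$, $(1-\omega t)^3$ and $(1-\omega^2 t)^3$. Each of the other $24$ elements is a \emph{non-scalar} element of order $3$ in $SL(3,\CC)$, so its three eigenvalues are cube roots of unity with product $1$; the only such triple that is neither $(1,1,1)$ nor a constant triple is the full set $\{1,\omega,\omega^2\}$. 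Consequently every non-central element has characteristic polynomial $\lambda^3-1$ and therefore $\det(I-t\,g)=1-t^3$. (A direct check on, say, $g_1g_2$ confirms the characteristic polynomial $\lambda^3-1$.) The mild subtlety to watch is that the scalar factors $\omega^k$ merely permute the triple $\{1,\omega,\omega^2\}$, which is precisely why all $24$ non-central determinants collapse to the single expression $1-t^3$.

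With the dichotomy in hand the sum is immediate:
\[
M(t)=\frac{1}{27}\left[\frac{1}{(1-t)^3}+\frac{1}{(1-\omega t)^3}+\frac{1}{(1-\omega^2 t)^3}+\frac{24}{1-t^3}\right].
\]
Expanding each scalar term as $\frac{1}{(1-a t)^3}=\sum_{n\geq 0}\binom{n+2}{2}a^n t^n$ and using $1+\omega^n+\omega^{2n}=3$ when $3\mid n$ and $0$ otherwise, the three central contributions survive only in degrees $n=3d$ and add up to $3\binom{3d+2}{2}$, while $\frac{24}{1-t^3}=24\sum_{d}t^{3d}$. Collecting the coefficient of $t^{3d}$ gives $\frac{1}{27}\bigl(3\binom{3d+2}{2}+24\bigr)$, and a short simplification turns this into $1+\tfrac{d(d+1)}{2}$; all coefficients in degrees $n\not\equiv 0\pmod 3$ vanish. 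This is exactly the claimed formula.

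Finally I would note that the result coincides term by term with the dimensions computed in Lemma \ref{Lemma2}. This matching, together with the surjectivity of Lemma \ref{Lemma1}, is what will force the graded ring homomorphism $\phi^*$ to be an isomorphism in the theorem: a degree-preserving surjection between graded vector spaces with equal finite dimensions in every degree is necessarily bijective. The only real obstacle in the argument is the eigenvalue dichotomy above; once that is established, the remaining manipulations are the routine generating-function computation already rehearsed in the proof of Lemma \ref{Lemma2}.
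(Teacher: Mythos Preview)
Your proof is correct and follows essentially the same route as the paper: both compute the Hilbert series via Molien's formula (the paper derives it from character theory rather than invoking it by name), both split the sum into the three central scalars and the $24$ non-central elements, and both rest on the observation that every non-central $g\in\Gamma$ has eigenvalue multiset $\{1,\omega,\omega^2\}$. Your justification of this eigenvalue dichotomy (finite order $\Rightarrow$ diagonalizable, eigenvalues cube roots of unity with product $1$, non-scalar forces the full triple) is in fact slightly more explicit than the paper's, which simply asserts that a non-central element can be diagonalized to $\diag(1,\omega,\omega^2)$.
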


\begin{proof}
Note that $\left(\CC[X,Y,Z]\right)_n$ is a finite dimensional representation $\rho_n$ of $\Gamma$, which admits a unique decomposition in irreducible representations:
\begin{equation}
\left(\CC[X,Y,Z]\right)_n=\rho_n=\rho_0^{a_{0,n}}\oplus
\rho_1^{a_{1,n}}\oplus\ldots.
\end{equation}
However the space of $\Gamma$-invariants polynomials is isomorphic to the trivial subrepresentation in $\rho_n$
\begin{equation}
\left(\CC[X,Y,Z]^{\Gamma}\right)_n=\bigoplus_{a_{0,n}}\rho_0
\end{equation}
and its dimension is the multiplicity $a_{0,n}$ of $\rho_0$.
The representation theory of finite groups says us that if we define the character function
\begin{align}
\chi_{\rho}: & \Gamma \longrightarrow \CC \\
             & g      \longmapsto     \text{tr} (\rho(g))\nonumber
\end{align}
the multiplicity $a_i$ of the irreducible representation $\rho_i$ in the decomposition of $\rho$ is
\begin{equation}
a_i=\frac{1}{|\Gamma|}\sum_{g\in\Gamma}\,\chi_\rho(g)\,\overline{\chi_{\rho_i}(g)}.
\end{equation}
For the trivial representation one has $\chi_{\rho_0}(g)=1$ for any $g\in\Gamma$, hence
\begin{equation}
a_{0,n}=\frac{1}{|\Gamma|}\sum_{g\in\Gamma}\,\chi_{\rho_n}(g)=\frac{1}{27}\sum_{g\in\Gamma}\text{tr}\,{\rho_n}(g).
\end{equation}
We need to determinate tr$\,\rho_n(g)$ for any $g\in\Gamma$. In case $g\notin C$ we can diagonalize it finding new monomials $(X',Y',Z')$ such that
\begin{equation}
g=\left(
\begin{array}{ccc}
1 & 0      & 0       \\
0 & \omega & 0       \\
0 & 0      & \omega^2\\
\end{array}\right).
\end{equation}
${\rho_n}'=\left(\CC[X',Y',Z']\right)_n$ is a representation equivalent to $\rho_n$, thus the characters are the same. So, it suffices to compute the trace of $\text{diag}(1,\omega,\omega^2)$ on $\rho_n$. Note that
\begin{eqnarray}
&&\frac{1}{1-\lambda_1 X}\frac{1}{1-\lambda_2 Y}\frac{1}{1-\lambda_3 Z}\cr
&&\qquad\qquad\ = (1+\lambda_1 X+\ldots+\lambda_1^A X^A+\ldots)
    (1+\ldots+\lambda_2^B Y^B+\ldots)(1+\ldots+\lambda_3^C Z^C+\ldots)\cr
&&\qquad\qquad\ = \sum_{A,B,C\geq 0}\lambda_1^A\lambda_2^B\lambda_3^C X^A Y^B Z^C.
\end{eqnarray}
The matrix $g=\text{diag}(\lambda_1,\lambda_2,\lambda_3)$ has trace $\sum_{A+B+C}\lambda_1^A\lambda_2^B\lambda_3^C$ on $\left(\CC[X,Y,Z]\right)_n$. Putting $X=Y=Z=t$ we thus get the generating function:
\begin{equation}
\sum_{n=0}^{\infty}\text{tr}\,\rho_n(g)\,t^n =
\sum_{n=0}^{\infty}\left(\sum_{A+B+C=n}\lambda_1^A\lambda_2^B\lambda_3^C
\right)t^n=
\frac{1}{1-\lambda_1 t}\frac{1}{1-\lambda_2 t}\frac{1}{1-\lambda_3 t}.
\end{equation}
In the case $g=\text{diag}(1,\omega,\omega^2)$ we obtain:
\begin{equation}
\sum_{n=0}^{\infty}\text{tr}\,\rho_n(g)\,t^n =
\frac{1}{1-t}\frac{1}{1-\omega t}\frac{1}{1-\omega^2 t}=
\frac{1}{1-t^3}=
\sum_{n=0}^{\infty}t^{3n} .
\end{equation}
The case $g=\omega^a I\in C$ is easier:
\begin{equation}
\text{tr}\,\rho_n(\omega^a I)=\omega^{an}\text{dim}\left(\CC[X,Y,Z]\right)_n
=\omega^{an}\left(^{n+2}_{2}\right),
\end{equation}
where the last equivalence can be obtained as the one for $\text{dim}\left(\CC[x,z,y]\right)_n$.\\
It follows that $\left(\CC[X,Y,Z]^{\Gamma}\right)_n=0$ if $n\neq 0\ \text{mod}\ 3$, as it should appear obvious since $\omega I$ acts as multiplication by $\omega^n$ on them. Instead for $n=3d$:
\begin{align}
\text{dim}\left(\CC[X,Y,Z]^\Gamma\right)_{3d}
&=\frac{1}{27}\left(1\cdot24+(1+\omega^{3d}+\omega^{6d})\left(^{3d+2}_{2}\right)\right)\cr
&=\frac{1}{27}\frac{27d^2+27d+54}{2}
=1+\frac{d(d+1)}{2}.
\end{align}
\end{proof}

We conclude the proof of the main theorem of the section.
\begin{proof}
As observed in \cite{Verlinde:2005jr}
\begin{equation}
(w^2+y^3-27wx^3-3wyz+wz^3)\subset\text{Ker}\,\phi^*.
\end{equation}
Thus we have the ring homomorphism (that we call again $\phi^*$)
\begin{equation}
\label{algmap}
\phi^*:\quad \frac{\CC[x,z,y,w]}{(w^2+y^3-27wx^3-3wyz+wz^3)}
           \longmapsto\CC[X,Y,Z]^\Gamma
\end{equation}

This map is injective. In fact for the lemma \ref{Lemma1} the map $\phi_R$ is surjective. This implies that the map
\begin{eqnarray}
\phi_R^*:\quad \CC[x,z,y] && \longrightarrow \quad \CC[X,Y,Z]^\Gamma\\
               (x,z,y)    && \longrightarrow \quad (f_0,f_1,f_2)\nonumber
\end{eqnarray}
between the rings is injective and there are not relations involving $x,z,y$. Moreover the image of $\CC^3/\Gamma$ under the geometric map $\phi$ is a hypersurface in $\CC^4$. It is contained in $(w^2+y^3-27wx^3-3wyz+wz^3)=0$, that is an irreducible hypersurface, thus the image of $\phi$ coincides with it. Hence $\phi^*$ is injective.

The map \ref{algmap} is factorizable in vector space morphisms
\begin{equation}
\phi^*_n:
\left(\quad\frac{\CC[x,z,y,w]}{(w^2+y^3-27wx^3-3wyz+wz^3)}\right)_{n}
\longmapsto\left(\CC[X,Y,Z]^\Gamma\right)_{n}.
\end{equation}
The map $\phi^*$ is surjective if $\phi^*_n$ is surjective for any $n$. $\text{Ker}(\phi^*)=\varnothing$ implies $\text{Ker}(\phi^*_n)=\varnothing$ for any $n$, hence
\begin{equation}
\text{dim Im}(\phi^*_n)=\text{dim}
\left(\quad\frac{\CC[x,z,y,w]}{(w^2+y^3-27wx^3-3wyz+wz^3)}\right)_{n}.
\end{equation}
Therefore $\phi^*$ is surjective if
\begin{equation}
\text{dim}
\left(\quad\frac{\CC[x,z,y,w]}{(w^2+y^3-27wx^3-3wyz+wz^3)}\right)_{n}
=\text{dim}\left(\CC[X,Y,Z]^\Gamma\right)_{n}
\end{equation}
for any $n$, that is the statement of the lemmas \ref{Lemma2}, \ref{Lemma3}.
\end{proof}

The isomorphism $\phi^*$ determines the geometric isomorphism $\phi$ that maps the orbifold $\CC^3/\Gamma$ in the hypersurface
\begin{equation}
w^2+y^3-27wx^3-3wyz+wz^3=0\ \subset\ \CC^4.
\end{equation}
In particular it sends the singular curves of the orbifold to the singular locus of the hypersuface, which by standard analysis is the union of four curves intersecting in the origin:
\begin{align}
\mathcal{C}_1 &\longmapsto (0,t,t^2,t^3),\cr
\mathcal{C}_2 &\longmapsto (t,3t,0,0),\\
\mathcal{C}_3 &\longmapsto (t,3\omega\,t,0,0),\cr
\mathcal{C}_4 &\longmapsto (t,3\omega^2\,t,0,0).\nonumber
\end{align}
Finally, we observe that the map $\phi$ defines an isomorphism of surfaces between the quotient $\PP^2/\tilde\Gamma$ and a singular hypersurface in a weighted projective space:
\begin{eqnarray}
\tilde\phi: & \PP^2/\tilde\Gamma  &\longrightarrow \PP_{1,1,2,3}\\
&(X:Y:Z)             &\longmapsto (x:z:y:w).\nonumber
\end{eqnarray}
In particular it sends the singular points to:
\begin{align}
&q_1=(1:0:0)\sim(0:1:0)\sim(0:0:1)  \longmapsto (0:1:1:1),\cr
&q_2=(1:1:1)\sim(1:\omega^2:\omega)\sim(1:\omega:\omega^2) \longmapsto (1:3:0:0),\\
&q_3=(1:\omega:1)\sim(1:1:\omega)\sim(\omega:1:1) \longmapsto (1:3\omega:0:0),\cr
&q_4=(1:\omega^2:1)\sim(\omega^2:1:1)\sim(1:1:\omega^2) \longmapsto (1:3\omega^2:0:0).\nonumber
\end{align}

\subsection{The automorphisms group of $\Gamma$ and the Hesse pencil of cubic curves}
\label{AutHP}
In this sections we study the automorphisms group of $\Gamma$ and we prove the following theorem:

\begin{theorem}
\label{theorHessian}
The largest subgroup of the automorphism group $PGL(3,\CC)$ of the projective plane $\PP^2$ that respects the quotient $\PP^2/\tilde\Gamma$ is the Hessian group.
\end{theorem}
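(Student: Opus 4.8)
The plan is to translate the condition ``respects the quotient'' into a purely group-theoretic one and then to compute the resulting normalizer explicitly. An element $\sigma\in PGL(3,\CC)$ descends to an automorphism of $\PP^2/\tilde\Gamma$ exactly when it carries $\tilde\Gamma$-orbits to $\tilde\Gamma$-orbits. First I would show this is equivalent to $\sigma$ normalizing $\tilde\Gamma$: if $\sigma$ permutes the orbits, then for each $g\in\tilde\Gamma$ and each point $p$ one has $\sigma g p=g'\sigma p$ for some $g'\in\tilde\Gamma$; since $\tilde\Gamma$ is finite and acts faithfully, the element $g'$ is locally constant in $p$, hence constant on the connected variety $\PP^2$, so $\sigma g\sigma^{-1}=g'\in\tilde\Gamma$. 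Thus the largest subgroup respecting the quotient is precisely the normalizer $N:=N_{PGL(3,\CC)}(\tilde\Gamma)$, and everything reduces to proving that $N$ is the Hessian group.

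To compute $N$ I would use the conjugation action on $\tilde\Gamma\simeq\ZZ_3\times\ZZ_3$, which gives a homomorphism $N\to\mathrm{Aut}(\tilde\Gamma)=GL(2,\FF_3)$, a group of order $48$. Its kernel is the centralizer $C_{PGL}(\tilde\Gamma)$, which I claim equals $\tilde\Gamma$ itself. Indeed, choosing a matrix representative $\sigma\in GL(3,\CC)$ with $\sigma g_i\sigma^{-1}=\lambda_i g_i$ for scalars $\lambda_i$, the induced map on $\tilde\Gamma$ is the identity, so $\sigma g_i\sigma^{-1}=\omega^{k_i}g_i\in g_iC$. Since conjugation by $g_1^a g_2^b$ already sends $g_1\mapsto\omega^{-b}g_1$ and $g_2\mapsto\omega^{a}g_2$, the inner automorphisms of $\Gamma$ realize every pair $(k_1,k_2)\in\FF_3^2$; picking $\gamma\in\Gamma$ realizing the same scalars, $\sigma\gamma^{-1}$ commutes with all of $\Gamma$, and by irreducibility of $\Gamma$ on $\CC^3$ Schur's lemma forces $\sigma\gamma^{-1}$ to be scalar, so $[\sigma]=[\gamma]\in\tilde\Gamma$. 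Hence $\ker=C_{PGL}(\tilde\Gamma)=\tilde\Gamma$ has order $9$.

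The decisive point is to pin down the image. Here I would exploit that the center $C=\{I,\omega I,\omega^2 I\}$ of $\Gamma$ consists of genuine scalar matrices, which conjugation fixes pointwise. The commutator $[g_1,g_2]=\omega I$ defines a nondegenerate alternating pairing $\tilde\Gamma\times\tilde\Gamma\to C$, and the computation $[\sigma x\sigma^{-1},\sigma y\sigma^{-1}]=\sigma[x,y]\sigma^{-1}=[x,y]$ shows this pairing is preserved \emph{exactly}, with multiplier $1$ rather than merely up to scalar. Identifying $C\simeq\FF_3$, the induced automorphism of $\tilde\Gamma\simeq\FF_3^2$ therefore lies in $Sp(2,\FF_3)=SL(2,\FF_3)$ of order $24$; this is exactly what cuts the possible image down from $GL(2,\FF_3)$ to $SL(2,\FF_3)$. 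To see the image is all of $SL(2,\FF_3)$ I would exhibit explicit normalizing transformations: the Fourier matrix $F=(\omega^{ij})_{i,j}$ interchanges $g_1$ and $g_2$ (realizing $\left(\begin{smallmatrix}0&-1\\1&0\end{smallmatrix}\right)$), while a suitable diagonal matrix realizes a shear $\left(\begin{smallmatrix}1&1\\0&1\end{smallmatrix}\right)$, and these two generate $SL(2,\FF_3)$. Combining the three facts gives the exact sequence $1\to\tilde\Gamma\to N\to SL(2,\FF_3)\to 1$, so $|N|=9\cdot24=216$.

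Finally I would identify $N$ with the Hessian group. The nine base points of the Hesse pencil are the common inflection points of its members, and $\tilde\Gamma$ acts on them simply transitively (the Schr\"odinger representation of the Heisenberg group); as this orbit is intrinsically distinguished, $N$ must preserve it, hence preserve the Hesse configuration and the pencil it generates. Since the Hessian group is by definition the group of projective transformations preserving the Hesse pencil and is classically of order $216$, the inclusion $N\subseteq(\text{Hessian group})$ together with $|N|=216$ yields equality. I expect the main obstacle to be the two halves of the image computation: the conceptual step that conjugation fixes the scalar center and so forces $SL(2,\FF_3)$ rather than $GL(2,\FF_3)$, and the concrete verification that the Fourier and diagonal transformations genuinely normalize $\Gamma$ and generate all of $SL(2,\FF_3)$.
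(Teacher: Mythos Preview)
Your proposal is correct and follows essentially the same route as the paper: both establish the exact sequence $1\to\tilde\Gamma\to N\to SL(2,\FF_3)\to1$ (your Fourier and diagonal normalizers are precisely the paper's explicit $N_S$ and $N_T$), deduce $|N|=216$, and conclude by showing $N$ lies inside the order-$216$ Hessian group. The only differences are cosmetic---you invoke Schur's lemma for the kernel and the symplectic commutator pairing for the $SL_2$ constraint where the paper does both by direct matrix computation, and your final ``intrinsically distinguished'' step is a mild hand-wave where the paper simply checks that its four generators $g_1,g_2,N_S,N_T$ preserve the nine base points.
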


This Hessian group naturally acts on the Hesse pencil of plane cubic curves. The analysis of the Hesse pencil behaviour under the quotient will suggest the right desingularization for the singular surface $\PP^2/\tilde\Gamma$ as a quasi del Pezzo surface.
\subsubsection{The automorphisms group of $\Gamma$}
We keep on the analysis of the orbifold by a deeper study of the properties of $\Gamma$. The normalizer $N\subset GL(3,\CC)$ of $\Gamma$ is the group defined by
\begin{equation}
N:=\{n\in GL(3,\CC):n\Gamma n^{-1}\subset\Gamma\}.
\end{equation}
It acts naturally on $\CC^3$ and it is the largest subgroup of $GL(3,\CC)$ which sends a $\Gamma$ orbit into another one. We define
\begin{equation}
D:=\{zI:z\in\CC^*\}
\end{equation}
the normal subgroup of $N$ of diagonal matrices in $GL(3,\CC)$. The image of $N$ in $PGL(3,\CC)$ is the quotient $\tilde N:=N/D$. It acts naturally on $\PP^2$ and it is the largest subgroup of $PGL(3,\CC)$ which sends a $\tilde\Gamma$ orbit into another one. $\tilde\Gamma$ is a normal subgroup of $\tilde N$ and the quotient group $\tilde N/\tilde\Gamma$ acts naturally on the quotient surface $\PP^2/\tilde\Gamma$. In fact, if $\bar x\in \PP^2/\tilde\Gamma$ has representative $x\in \PP^2$ and if $\bar n\in\tilde N / \tilde\Gamma$ has representative $n\in \tilde N$ then $\bar n\cdot\bar x=\overline{nx}$ is well defined because if $g,h\in\tilde\Gamma$ we have $n(gh)n^{-1}=g'\in \tilde\Gamma$, so $\overline{ng}\cdot\overline{hx}=\overline{g'nx}=\overline{nx}$.

\begin{theorem}
The group $\tilde N/\tilde\Gamma$ is isomorphic to $SL(2,\ZZ_3)$ and the group $\tilde N$ has order $216$.
\end{theorem}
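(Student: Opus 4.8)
The plan is to identify $\tilde N$ with a group of automorphisms of $\Gamma$ and then exploit the Heisenberg (symplectic) structure of $\Gamma$. First I would observe that the defining $3$-dimensional representation $\rho$ of $\Gamma$ on $\CC^3$ is irreducible: it is the unique irreducible representation of the Heisenberg group on which the center $C=\{I,\omega I,\omega^2 I\}$ acts through the character $\omega$. Hence, by Schur's lemma, the centralizer of $\Gamma$ in $GL(3,\CC)$ is exactly the scalar subgroup $D$, so conjugation gives an injection $\tilde N=N/D\hookrightarrow\mathrm{Aut}(\Gamma)$. Since conjugation by any invertible matrix fixes scalar matrices, every automorphism in the image fixes $C$ pointwise; conversely, by the Stone--von Neumann uniqueness theorem any automorphism $\theta$ of $\Gamma$ that fixes $C$ pointwise preserves the central character, whence $\rho\circ\theta\cong\rho$ and $\theta$ is realized by conjugation by some $n\in GL(3,\CC)$, which then necessarily normalizes $\Gamma$. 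Thus $\tilde N$ is isomorphic to the group $\mathrm{Aut}_C(\Gamma)$ of automorphisms of $\Gamma$ fixing the center pointwise.

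Next I would introduce the commutator pairing on $V:=\tilde\Gamma=\Gamma/C$. The map $(\bar g,\bar h)\mapsto ghg^{-1}h^{-1}\in C$, after identifying $C\simeq\ZZ_3$, is a well-defined, alternating, $\FF_3$-bilinear form on the two-dimensional space $V$, and it is nondegenerate precisely because the center of $\Gamma$ is exactly $C$. Each $\theta\in\mathrm{Aut}_C(\Gamma)$ descends to a linear map $\bar\theta$ of $V$, and since $\theta$ fixes $C$ pointwise, $\bar\theta$ preserves this form, so $\bar\theta\in\mathrm{Sp}(V)=\mathrm{Sp}(2,\ZZ_3)=SL(2,\ZZ_3)$. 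This defines a homomorphism $\pi:\mathrm{Aut}_C(\Gamma)\to SL(2,\ZZ_3)$. To compute its kernel, note that an automorphism fixing $C$ pointwise and inducing the identity on $V$ has the shape $\theta(g)=g\,c(g)$ with $c\in\mathrm{Hom}(V,C)$; nondegeneracy of the pairing lets one write $c(\cdot)=[\,\bar g_0,\cdot\,]$ for a unique $\bar g_0\in V$, so that $\theta$ is conjugation by $g_0$, i.e. inner. Hence $\ker\pi=\mathrm{Inn}(\Gamma)\simeq\Gamma/C=\tilde\Gamma$, which is exactly the copy of $\tilde\Gamma$ inside $\tilde N$ described earlier.

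It remains to prove surjectivity of $\pi$, which I expect to be the main obstacle: injectivity of the induced map $\tilde N/\tilde\Gamma\to SL(2,\ZZ_3)$ is now formal from the kernel computation, but one must still produce enough normalizing matrices to reach all of $SL(2,\ZZ_3)$. Since this group is generated by the two elements $\left(\begin{smallmatrix}1&1\\0&1\end{smallmatrix}\right)$ and $\left(\begin{smallmatrix}0&-1\\1&0\end{smallmatrix}\right)$, it suffices to exhibit two explicit elements of $N$ realizing them. The discrete Fourier transform matrix (proportional to $(\omega^{jk})_{j,k}$) exchanges the ``clock'' generator $g_1$ and the ``shift'' generator $g_2$ up to the center, implementing the symplectic rotation; a diagonal matrix such as $\diag(1,1,\omega)$ (suitably normalized) commutes with $g_1$ and sends $g_2\mapsto g_1^{k}g_2$ modulo $C$, implementing the shear. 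One checks directly that both conjugate $g_1,g_2$ back into $\Gamma$, so both lie in $N$. Granting surjectivity, $\pi$ descends to an isomorphism $\tilde N/\tilde\Gamma\simeq SL(2,\ZZ_3)$. Finally, since $|SL(2,\ZZ_3)|=3(3^2-1)=24$ and $|\tilde\Gamma|=9$, the extension yields $|\tilde N|=24\cdot 9=216$, establishing both claims.
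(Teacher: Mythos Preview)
Your argument is correct and follows the same structural arc as the paper: define a homomorphism from $\tilde N$ to $SL(2,\ZZ_3)$ via the induced action on $\Gamma/C$, identify its kernel with $\tilde\Gamma$, and prove surjectivity by exhibiting explicit preimages of the two standard generators of $SL(2,\ZZ_3)$ (the discrete Fourier transform matrix and a diagonal ``shear'' matrix). The paper does exactly this, with the DFT matrix $N_S=(\omega^{jk})$ and the diagonal $N_T=\diag(1,\omega,1)$ playing the role of your two lifts; your choice $\diag(1,1,\omega)$ differs from $N_T$ only by a central factor and works equally well.

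Where you differ is in packaging. You invoke Schur's lemma and Stone--von~Neumann to identify $\tilde N$ with $\mathrm{Aut}_C(\Gamma)$ at the outset, and you interpret both the containment of the image in $SL(2,\ZZ_3)$ and the kernel computation through the nondegenerate commutator pairing on $\Gamma/C$. The paper instead stays entirely concrete: it defines $\chi:N\to GL(2,\ZZ_3)$ directly, checks $\det=1$ by hand from the relation $g_2g_1=\omega^2 g_1g_2$, and determines the kernel by an explicit matrix argument (anything in $\ker\chi$ equals an element of $\Gamma$ times something commuting with both $g_1$ and $g_2$, hence scalar). Your symplectic framing explains \emph{why} these computations come out as they do and generalizes without change to Heisenberg groups over any $\FF_p$; the paper's hands-on version is self-contained and avoids appealing to representation-theoretic uniqueness. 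Both routes arrive at the identical short exact sequence $1\to\tilde\Gamma\to\tilde N\to SL(2,\ZZ_3)\to 1$ and the identical order count $9\cdot 24=216$.
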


\begin{proof}
Any element $n\in N$ defines an automorphism
\begin{align}
\phi_n:\qquad
& \Gamma  \stackrel{\simeq}{\longrightarrow} \Gamma\\
& g       \longmapsto                         ngn^{-1}.\nonumber
\end{align}
Such automorphisms group of Heisenberg group are studied in great generality in \cite{weil} (see also \cite{ligozat,yoshida,gerardin76,gerardin77,cms}). Any element in $\Gamma$ can be written uniquely as $\omega^k g_1^a g_2^b$, with $k,a,b\in\{0,1,2\}$. For any $n\in N$ the automorphism $\phi_n$ is determined by its action on the generators $g_1,g_2$ of $\Gamma$. Therefore any $\phi_n$ is determined by the elements $k,l,a,b,c,d\in \{0,1,2\}$:
\begin{equation}
\phi_n(g_1)=\omega^k g_1^a g_2^b,\qquad
\phi_n(g_2)=\omega^l g_1^c g_2^d.
\end{equation}
It is easy to verify that the map
\begin{eqnarray}
\chi:\qquad
& N  \longrightarrow GL(2,\ZZ_3)\\
& n  \longmapsto     \left(\begin{array}{cc} a & c\\ b & d \end{array}\right)\nonumber
\end{eqnarray}
is a homomorphism of groups.

Now we use the following lemmas:

\begin{lemma}
The homomorphism $\chi$ is a surjection from $N$ to $SL(2,\ZZ_3)$.
\end{lemma}

\begin{proof}
The image of $\chi$ is contained in $SL(2,\ZZ_3)$. We use the property
\begin{equation}
g_2 g_1 = \omega^2 g_1 g_2 \qquad \Rightarrow \qquad
g_2^n g_1^m = \omega^{2nm} g_1^m g_2^n
\end{equation}
and the fact that for any element $c\in C$ in the center of $\Gamma$ we have $\phi_n(c)=c$. We get
\begin{equation}
\phi_n(g_2)\phi_n(g_1)=\omega^2\phi_n(g_1)\phi_n(g_2)
\end{equation}
and as
\begin{equation}
\left\{
\begin{array}{ccccccccc}
\phi_n(g_2)\phi_n(g_1) &\ & = &\ & \omega^{k+l} g_1^c g_2^d g_1^a g_2^b
&\ & = &\ & \omega^{k+l+2ad} g_1^{a+c} g_2^{b+d},\\
\phi_n(g_1)\phi_n(g_2) &  & = &  & \omega^{k+l} g_1^a g_2^b g_1^c g_2^d
&  & = &  & \omega^{k+l+2bc} g_1^{a+c} g_2^{b+d},
\end{array}
\right.
\end{equation}
it follows that $k+l+2ad=2+k+l+2bc$, that is $ad-bc=1$.

$SL(2,\ZZ_3)$ has two generators:
\begin{equation}
S=\left(
\begin{array}{cc}
0 & -1\\
1 &  0
\end{array}\right),\qquad
T=\left(
\begin{array}{cc}
1 & 1\\
0 & 1\\
\end{array}\right).
\end{equation}
We have preimages in $N$ for both $S$ and $T$. Let
\begin{equation}
N_S:=\left(
\begin{array}{ccc}
1 &     1    &   1      \\
1 & \omega^2 & \omega   \\
1 & \omega   & \omega^2 \\
\end{array}\right) \qquad
\text{then}\qquad
N_S g_1 N_S^{-1}=g_2=g_1^0 g_2^1,\qquad
N_S g_2 N_S^{-1}=g_1^2=g_1^{-1} g_2^0,
\end{equation}
hence $N_S\in N$ and $N_S\mapsto S$.\\
Let
\begin{equation}
N_T:=\left(
\begin{array}{ccc}
1 &     0    & 0 \\
0 & \omega   & 0 \\
0 &     0    & 1 \\
\end{array}\right) \qquad
\text{then}\qquad
N_T g_1 N_T^{-1}=g_1=g_1^1 g_2^0,\qquad
N_T g_2 N_T^{-1}=g_1 g_2=g_1^1 g_2^1,
\end{equation}
hence $N_T\in N$ and $N_T\mapsto T$.
\end{proof}

\begin{lemma}
The kernel of $\chi$ is exactly $D\cdot\Gamma$, i.e. any element $n\in \ker\chi$ can be seen as the product of a diagonal matrix times
an element $g\in\Gamma$.
\end{lemma}

\begin{proof}
If $n\in \ker\chi$ then
\begin{equation}
\phi_n(g_1)= n g_1 n^{-1}=\omega^k g_1\qquad \text{and} \qquad
\phi_n(g_2)= n g_2 n^{-1} = \omega^l g_2.
\end{equation}
Let us define
\begin{equation}
g:=g_1^l g_2^{-k}\in\Gamma\qquad \text{and}\qquad n':=ng^{-1}\in N
\qquad \Rightarrow \qquad n=n'g.
\end{equation}
We know that
\begin{equation}
g\, g_1 g^{-1}=(g_1^l g_2^{-k}) g_1 (g_1^l g_2^{-k})^{-1} = \omega^{k} g_1,\qquad
g\, g_2 g^{-1}=(g_1^l g_2^{-k}) g_2 (g_1^l g_2^{-k})^{-1} = \omega^{l} g_2.
\end{equation}
Thus
\begin{eqnarray}
&&\omega^k g_1=n g_1 n^{-1}=n' g\, g_1 g^{-1} {n'}^{-1}=
\omega^k n' g_1 {n'}^{-1}\ ,
\\
&&\omega^l g_2=n g_2 n^{-1}=n' g\, g_2 g^{-1} {n'}^{-1}=
\omega^l n' g_2 {n'}^{-1}\ .
\end{eqnarray}
This means that $n'$ has to commute with $g_1$, therefore is a diagonal matrix, and with $g_2$, which implies that all its entries are equal. Hence $n'\in D$ and the statement follows.
\end{proof}

The map $\chi$ defines a short exact sequence:
\begin{equation}
\label{sequenza1}
0\longrightarrow D\cdot\Gamma\stackrel{i}{\longrightarrow} N
\stackrel{\chi}{\longrightarrow} SL(2,\ZZ_3)\longrightarrow 0,
\end{equation}
where $i$ is the natural inclusion.\\
When mapped to $PGL(3,\CC)$ the sequence \ref{sequenza1} gives a second exact sequence:
\begin{equation}
0\longrightarrow \tilde\Gamma\stackrel{\tilde i}{\longrightarrow}
\tilde N\stackrel{\tilde\chi}{\longrightarrow}
SL(2,\ZZ_3)\longrightarrow 0.
\end{equation}
This proves that $\tilde N/\tilde\Gamma\simeq SL(2,\ZZ_3)$.

For any matrices in $GL(2,\ZZ_3)$ there are $9-1=8$ choices for the elements of the first column and $9-3=6$ choices for the second column, that give $48$ matrices with determinants equal to $\pm 1$. Therefore the group $SL(2,\ZZ_3)$ has cardinality $|SL(2,\ZZ_3)|=24$. Hence the group $\tilde N$ is a finite group of order $9\cdot 24=216$.
\end{proof}

The group $\tilde N$ has the same order of the Hessian group. It is generated by
\begin{equation}
\label{generatori}
\tilde N=<g_1,g_2,N_S,N_T>
\end{equation}
and in the next section we will show that it coincides with the Hessian group.

\subsubsection{The Hesse pencil of plane cubic curves}
The Hesse pencil \cite{Hesse1,Hesse2} is a $1$-parameter family of plane cubic curves $E_\mu\subset\PP^2,\ \mu\in \PP^1$, passing for $9$ base points in particular position. We take the following base points for the pencil and we arrange them in a square array:
\begin{eqnarray}
\begin{array}{ccc}
p_0=(0:1:-1),        & p_1=(1:0:-1),        & p_2=(1:-1:0),\\
p_3=(0:1:-\omega),   & p_4=(1:0:-\omega^2), & p_5=(1:-\omega:0),\\
p_6=(0:1:-\omega^2), & p_7=(1:0:-\omega),   & p_8=(1:-\omega^2:0).
\end{array}
\end{eqnarray}
Thus the pencil is defined by:
\begin{equation}
E_{\mu}: \qquad x_0^3+x_1^3+x_2^3-3\mu x_0 x_1 x_2 =0.
\end{equation}
It is easy to see that there are twelve lines containing the points in the horizontal, vertical and diagonal rows of the above array. Each group of three lines is the support of one of the $4$ singular cubic curves of the pencil:
\begin{eqnarray}
&&\mu=\infty:   \qquad  x_0 x_1 x_2 =0,\cr
&&\mu=1:        \qquad       (x_0+x_1+x_2)(x_0+\omega x_1+\omega^2 x_2)
                        (x_0+\omega^2 x_1+\omega x_2) =0,\\
&&\mu=\omega:   \qquad       (x_0+\omega x_1+x_2)(\omega x_0+x_1+x_2)
                        (x_0+x_1+\omega x_2)=0,\cr
&&\mu=\omega^2: \qquad       (x_0+\omega^2 x_1+x_2)(x_0+x_1+\omega^2 x_2)
                        (\omega^2 x_0+x_1+x_2) =0.\nonumber
\end{eqnarray}
With this choice of the base points, the lines in each singular fiber intersect in three points forming a degenerate orbit of
$\tilde\Gamma$:
\begin{eqnarray}
&&\text{Sing}(E_0)=\text{Fix}(<g_1>)=\{(1:0:0),(0:1:0),(0:0:1)\},\cr
&&\text{Sing}(E_1)=\text{Fix}(<g_2>)=
\{(1:1:1),(1:\omega^2:\omega),(1:\omega:\omega^2)\},\\
&&\text{Sing}(E_\omega)=\text{Fix}(<g_1^2g_2>)=
\{(1:\omega^2:1),(\omega^2:1:1),(1:1:\omega^2)\},\cr
&&\text{Sing}(E_{\omega^2})=\text{Fix}(<g_1g_2>)=
\{(1:\omega:1),(1:1:\omega),(\omega:1:1)\}.\nonumber
\end{eqnarray}
The Hessian group \cite{Jordan,Maschke,1930} $G$ is the subgroup of $PGL(3,\CC)$, the automorphism group of $\PP^2$, that keeps the set of the base points of the Hesse pencil invariant. It sends a curve $E_{\mu}$ of the pencil into another one (eventually itself). This is a group of order $216$. Therefore to show that $\tilde N$, the automorphism group of $\tilde\Gamma$, coincides to the Hesse group it suffices to observe that $\tilde N\subset G$, namely that $\tilde N$ preserves the base points of the pencil. A trivial computation shows that the generators \ref{generatori} of $\tilde N$ satisfy such requirement and this completes the proof of the theorem \ref{theorHessian}.

\subsubsection{The action of $\tilde\Gamma$ on the Hesse pencil}
\label{ActionHessepencil}
The group $\tilde\Gamma\subset\tilde N$ preserves the Hesse pencil structure. Actually the action of $\tilde\Gamma$ on any curve $E_\mu$ of the pencil is the translation by the base points of the pencil. This implies that for any curve $E_\mu/\tilde\Gamma\simeq E_\mu$ and it suggests a possible resolution of $\PP^2/\tilde\Gamma$.

We briefly review how it is defined the group structure on a cubic curve \cite{Hartshorne}. Let $X$ be a non singular cubic curve in $\PP^2$ and let $\text{Pic}^0(X)\subset\text{Pic}(X)$ be the group of degree zero divisors on $X$. A point $P_0$ on $X$ is an inflection point if the intersection multiplicity of the tangent line to $X$ in $P_0$ is equal $3$. A plane cubic curve has exactly $9$ inflection points and any line intersecting any two of them intersects the curve in a third one. The map that to any closed point $P\in X$ associates the divisor $P-P_0\in \text{Pic}^0(X)$ is bijective and gives a group structure on the set of closed points of $X$, the one from $\text{Pic}^0(X)$ with $P_0$ the identity. Similar varieties are known as group varieties. The group law has a nice geometric interpretation. In $\PP^2$ each couple of line $L,L'$ are equivalent in the Picard group $\text{Pic}(\PP^2)$, therefore if $L\cap X={P,Q,R}$ and $L'\cap X={P',Q',R'}$ we have $P+Q+R=P'+Q'+R!
 '$ in $\text{Pic}(X)$. In particular, since $P_0$ is an inflection point, it follows that $P+Q+R=3P_0$ in $\text{Pic}(X)$. This implies that $P+Q+R=0$ in the group variety $X$. The sum of $P$ and $Q$ is equal to the point $T$ such that $P+Q-T=0$, which means that $-T=R\in X\cap L$. However $-T\in X\cap L''$ where $L''$ is the line passing for $T$ and $P_0$. Hence we conclude that the sum of $P$ and $Q$ is the third intersection point $T$ between the curve and the line $L"$ passing for $R$ and the fixed inflection point $P_0$.
\begin{figure}[hbtp]
\centering
\resizebox{7cm}{!}{
\includegraphics[viewport=250 340 360 450,width=5cm,clip]{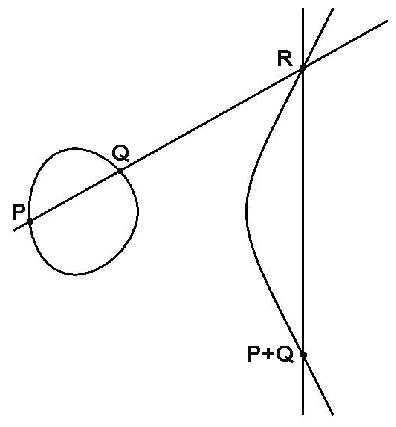}}
\caption{The group law on the cubic curve in $\PP^2$ defined by $Y^2 Z
= X^3-X Z^2$. The inflection point $P_0$ is the point at infinity.}
\label{Fig4} 
\end{figure}

On any smooth cubic curve of the Hesse pencil the $9$ inflections points coincide with the base points of the pencil. It is simple to see that if we fix a point, for example $p_0$, then on any $E_{\mu}$ the sum $p+p_i$ of any point $p\in E_{\mu}$ and the base point $p_i$ is equal to the action of an element of $\tilde\Gamma$ on $p\in\PP^2$ (it is sufficient to prove it for $p$ a base points of the pencil). Actually for any choice of the fixed point $p_i$ we have a group isomorphism $H_i$ between the group of inflection points (the base points of the pencil with group law the one from the group varieties $E_{\mu}$) and $\tilde\Gamma$. For example if we fix $p_0$ then:
\begin{equation}
H_0:(p_0,p_1,p_2,p_3,p_4,p_5,p_6,p_7,p_8)\longmapsto(I,g_2,g_2^2,g_1^2g_2,g_1g_2,g_1g_2^2,g_1^2g_2^2,g_1,g_1^2).
\end{equation}

Thus, the action of $\tilde\Gamma$ on $\PP^2$ is the translation by the base points of the pencil, the point of order three. On any curve $E_\mu$ such translation group is called $E_\mu[3]$. Thus the image $E_\mu/\tilde\Gamma \subset\PP^2/\tilde\Gamma$ is just $E_\mu/E_\mu[3]$ which is well known, being isomorphic to $E_\mu$ under the map:
\begin{eqnarray}
E_\mu   & \stackrel{\cdot 3}{\longrightarrow} & E_\mu/E_\mu[3]\\
\ p     & \longmapsto                         & \quad 3p\nonumber
\end{eqnarray}
If we exclude the base points of the Hesse pencil we can see the projective plane as a bundle of elliptic curves $E_\mu$ on $\PP^1$. We just proved that the quotient map sends any fiber $E_\mu$ of the bundle to an elliptic curve in $\PP^2/\tilde\Gamma$. Thus also $\PP^2/\tilde\Gamma$ contains a natural elliptic pencil, with any fiber isomorphic to one in $\PP^2$. However the projective plane $\PP^2$ is not isomorphic to the singular $\PP^2/\tilde\Gamma$. The pencil in $\PP^2/\tilde\Gamma$ has only one base point, the image of the $9$ base points of the Hesse pencil in $\PP^2$ under the quotient. Therefore the isomorphisms on the fibers are not compatible in the base points of the Hesse pencil and $\PP^2$ is only birational to $\PP^2/\tilde\Gamma$. This suggest that if we blow up $8$ of the $9$ base points of the pencil we should obtain a regular morphism $\Psi$ from a smooth surface $S$, actually a limit of del Pezzo surfaces, to $\PP^2/\tilde\Gamma$ and hence a resolution of such variety. Now we can state the main theorem of the paper:

\begin{theorem}
\label{mainth}
Let $S:=Bl_{p1,\ldots,p8}(\PP^2)$ be the quasi del Pezzo surface obtained as the blow up $\PP^2$ at eight base points of the Hesse pencil. Then we have the following commutative diagram:\\[-1mm]

\hspace{4.7cm}
\xymatrix{
E_\mu \ar@{->}[d]^{\cdot 3} \ar@{^{(}->}[r]
&\PP^2 \ar@{-->}[d]
&S \ar[l]_{\pi} \ar@{->}[ld]^\Psi\\
E_\mu/\tilde\Gamma \ar@{^{(}->}[r] &\PP^2/\tilde\Gamma}\\[3mm]
where $\Psi$ is a desingularization map given by the anticanonical sections on $S$. The total space of the canonical bundle $K_S$ and the map $\Psi$ define a crepant resolution of the orbifold $\CC^3/\Gamma$.
\end{theorem}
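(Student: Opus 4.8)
The plan is to realise $\Psi$ as the anticanonical model map of $S$ and to identify its image, through the Verlinde--Wijnholt isomorphism, with $\PP^2/\tilde\Gamma$. Write $\pi:S\to\PP^2$ for the blow-down, $H=\pi^*\mathcal O_{\PP^2}(1)$ and $E_1,\dots,E_8$ for the exceptional divisors, so that $-K_S=3H-\sum_i E_i$. Since every plane cubic through $p_1,\dots,p_8$ automatically contains $p_0$, the linear system $|-K_S|$ is exactly the strict transform of the Hesse pencil, and in particular it is spanned by the descents of the two generating cubics $x=XYZ$ and $z=X^3+Y^3+Z^3$; this already matches $h^0(-K_S)=2$. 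First I would record that $-K_S$ is nef and big with $K_S^2=1$: it is big because $K_S^2>0$, and nef because the eight points lie on the Hesse configuration, so that $-K_S$ has degree zero on the eight $(-2)$-curves identified below and positive degree on every other irreducible curve. Thus $S$ is a genuine quasi (degree one) del Pezzo surface, $\bar S:=\mathrm{Proj}\,R(S)$ with $R(S)=\bigoplus_{n\ge0}H^0(S,-nK_S)$ is its anticanonical model, and $\Psi:S\to\bar S$ is the natural morphism.

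The engine of the argument is a graded ring isomorphism
\[
\Theta:\ \CC[X,Y,Z]^{\Gamma}\ \xrightarrow{\ \sim\ }\ R(S),
\]
sending a degree-$3n$ invariant $f$ to the section of $-nK_S$ obtained by descending $\pi^*f$. This map is well defined because of the main theorem of the previous subsection: every invariant is a polynomial in $f_0,f_1,f_2,f_3$, and these are, respectively, a Hesse cubic, a Hesse cubic, a product of two Hesse cubics and a cube of a Hesse cubic, hence vanish at each base point to orders $1,1,2,3$; so a monomial $f_0^af_1^bf_2^cf_3^d$ of degree $3(a+b+2c+3d)$ vanishes to order $\ge a+b+2c+3d$ at every $p_i$ and descends to a section of $-(a+b+2c+3d)K_S$. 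The homomorphism is clearly multiplicative and injective. To see it is onto I would compare dimensions: Lemma \ref{Lemma3} gives $\dim(\CC[X,Y,Z]^{\Gamma})_{3n}=1+\tfrac{n(n+1)}2$, while Riemann--Roch on $S$ gives $\chi(-nK_S)=1+\tfrac12 n(n+1)K_S^2=1+\tfrac{n(n+1)}2$, and $h^2(-nK_S)=h^0((n+1)K_S)=0$ together with Kawamata--Viehweg vanishing (applied to the nef and big $-(n+1)K_S$) force $h^0(-nK_S)=\chi(-nK_S)$. Hence $\Theta$ is an isomorphism, so $\bar S\cong\mathrm{Proj}\,\CC[X,Y,Z]^{\Gamma}=\PP^2/\tilde\Gamma$, the last identification being the isomorphism $\tilde\phi$ onto the sextic in $\PP_{1,1,2,3}$. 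Under $\Theta$ the map $\Psi$ is precisely $q\mapsto(f_0:f_1:f_2:f_3)$ evaluated at $\pi(q)$, i.e. the anticanonical desingularization claimed.

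It remains to check that $\Psi$ is a crepant minimal resolution and that the diagram commutes. The contracted curves are exactly those $C$ with $-K_S\cdot C=0$. Each singular fibre of the Hesse pencil is a triangle of three lines meeting in a degenerate $\tilde\Gamma$-orbit; as $p_0$ lies on exactly one line of each triangle, the two remaining lines miss $p_0$, pass through three of $p_1,\dots,p_8$, and so have strict transforms of type $H-E_i-E_j-E_k$ with self-intersection $-2$ and $-K_S\cdot C=0$. The two such curves of one triangle meet at the corresponding fixed point, producing an $A_2$ chain; the eight $(-2)$-curves thus form four $A_2$ configurations, which $\Psi$ contracts to the four $A_2$ points of $\PP^2/\tilde\Gamma$. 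By Nakai--Moishezon $\Psi$ is an isomorphism elsewhere, and because an $A_2$ resolution has zero discrepancy it is crepant, $K_S=\Psi^*K_{\PP^2/\tilde\Gamma}$. Commutativity of the left part of the diagram is the fibrewise content of Section \ref{ActionHessepencil}: restricting $\Psi$ to the strict transform $\tilde E_\mu\cong E_\mu$ and composing with $E_\mu/\tilde\Gamma\hookrightarrow\PP^2/\tilde\Gamma$ reproduces the quotient $E_\mu\to E_\mu/E_\mu[3]$, which is multiplication by $3$.

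For the threefold statement I would argue exactly as in the toric toy model. The Verlinde--Wijnholt isomorphism presents $\CC^3/\Gamma$ as the affine cone over $\PP^2/\tilde\Gamma$, i.e. as $\mathrm{Spec}\,\CC[X,Y,Z]^{\Gamma}$, while $\mathrm{Tot}(K_S)$ fibres over $S$ and maps, via $\Psi$ on the base together with the tautological identification of the $K_S$-fibre with the cone direction, birationally onto this cone. Since $\Gamma\subset SL(3,\CC)$ the canonical class of $\CC^3/\Gamma$ is trivial, and $\mathrm{Tot}(K_S)$ carries a nowhere-vanishing holomorphic three-form (it is a noncompact Calabi--Yau), so the resolution preserves the trivial canonical class and is crepant. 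I expect the main obstacle to be the well-definedness and surjectivity of $\Theta$ --- concretely, controlling the orders of vanishing of the invariants at the nine inflection points, which is exactly where the structure theorem for $\CC[X,Y,Z]^{\Gamma}$ is indispensable --- whereas the identification of the contracted curves and the crepancy statements are then essentially formal.
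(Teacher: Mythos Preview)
Your central claim---that the invariants $f_0,f_1,f_2,f_3$ are ``a Hesse cubic, a Hesse cubic, a product of two Hesse cubics and a cube of a Hesse cubic'' and therefore vanish at each $p_i$ to orders $1,1,2,3$---is false, and this is exactly where the argument breaks. The cubic $X^3+\omega Y^3+\omega^2 Z^3$ is \emph{not} a member of the Hesse pencil $\lambda(X^3+Y^3+Z^3)+\mu\,XYZ$; a direct evaluation gives $f_2(p_0)=-(\omega-\omega^2)^2=3\neq0$, and since the nine base points form a single $\tilde\Gamma$-orbit and $f_2$ is $\tilde\Gamma$-invariant, $f_2$ does not vanish at \emph{any} of them. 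Consequently $\pi^*f_2$ does not vanish along the exceptional divisors and does not descend to a section of $-2K_S$: your map $\Theta$ is not well defined. More structurally, the two $\bigl(1+\tfrac{n(n+1)}{2}\bigr)$-dimensional subspaces of degree-$3n$ polynomials you are trying to identify---the $\Gamma$-invariants on one side, and the polynomials with multiplicity $\ge n$ at $p_1,\dots,p_8$ on the other---are simply different subspaces (they only share the span of $f_0,f_1$ in low degree). Were $\Theta$ defined, the induced map $S\to\PP^2/\tilde\Gamma$ would be nothing but $S\xrightarrow{\pi}\PP^2\to\PP^2/\tilde\Gamma$, which contracts each $E_i$ to the unique base point downstairs and is not the anticanonical resolution.

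The paper proceeds differently: it builds the anticanonical map directly by writing down explicit generators $f_0,f_1\in H^0(-K_S)$, a sextic $g\in H^0(-2K_S)$ with genuine double points at $p_1,\dots,p_8$ and $g(p_0)\neq0$, and a nonic $h\in H^0(-3K_S)$ with triple points at $p_1,\dots,p_8$; crucially $g$ and $h$ are \emph{not} $\Gamma$-invariant. One then checks by direct computation that $(f_0:f_1:g:h)$ lands in the Verlinde--Wijnholt sextic $w^2+y^3-27wx^3-3wyz+wz^3=0$, already identified with $\PP^2/\tilde\Gamma$. The rest of your write-up---identifying the eight $(-2)$-curves as the strict transforms of the two non-$p_0$ lines in each singular triangle, their $A_2$ configurations, crepancy of their contraction, and the threefold statement via the cone---matches the paper and is fine; but you need to replace the broken $\Theta$ by an honest construction of the anticanonical generators.
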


\subsection{Del Pezzo surfaces and the resolution of $\PP^2/\tilde\Gamma$}
\label{recalldelPezzo}
In this section we prove theorem \ref{mainth} finding explicitly the map $\Psi$.
\subsubsection{Generalities on del Pezzo surfaces}\label{generalities}
For this section we refer to \cite{Hartshorne,Dolgachev} and to the references given there. A del Pezzo surface is defined to be a surface $X$ with ample anticanonical divisor class $K_X$. This means that it exists a positive integer $n$ such that $n K_X$ is very ample, i.e. the global sections of $n K_X$ give an embedding of $X$ in a projective space. The Nakai-Moishezon criterion says that a divisor $D$ on a surface $X$ is ample if and only if $D^2>0$ and $D\cdot C>0$ for all irreducible curves. Let $S$ be the surface obtained as the blow up of $n$ points $p_1,\ldots,p_n$ in $\PP^2$. The divisor group of $S$ is generated by $H$, the strict transform in $S$ of the form defining the line in $\PP^2$, and the exceptional divisors $E_i$. The intersection pairing is
\begin{equation}
H^2=1,\quad H\cdot E_i=0\quad \text{and}\quad \ E_i\cdot E_j=-\delta_{ij}.
\end{equation}
The canonical divisor of the surface is $-K_S=3H-E_1-\ldots-E_n$. It easily seen that the Nakai-Moishezon criterion implies that $S$ is a del Pezzo surface if $n\leq 8$ and the $p_i$ are in general position, no $3$ of them are collinear and no $6$ of them lie on a conic. A classical result states that every del Pezzo surfaces is either isomorphic to the blow up of $n\leq 8$ points in $\PP^2$ or isomorphic to $\PP^1\times\PP^1$.

{F}rom now on we specialize to the case $n=8$. It is a well known fact that
\begin{equation}
\text{dim}\ H^0(S,-nK_S)=1+\frac{n(n+1)}{2}\qquad (n\geq 1).
\end{equation}
We report the essential properties of the anticanonical maps $\phi_n$, the rational maps defined by the global sections of $-nK_S$, for $n=1,2,3$. We assume that the $8$ points are in general position and that they are all distinct. Thus the points define a pencil of smooth cubic curves in $\PP^2$ passing for them, all intersecting in a ninth points $p_0$. We note that any curve $E$ in the pencil has a strict transform in $S$, again denoted by $E$, linearly equivalent to $-K_S$. Hence for any other $E'$ in the pencil, $E\cdot E'=1$ in $S$ and $p_0$ is the unique points in $S$ where they meet. The anticanonical bundle has $\text{dim}\ H^0(-K_S)=2$, therefore the cubic forms $f_0,f_1$ defining $E$ and $E'$ give a basis of $H^0(-K_S)$. The map $\phi_1$ is defined by:
\begin{eqnarray}
\phi_1: \quad &S& \dashrightarrow \PP^1\\
&q& \mapsto (f_0(q):f_1(q))\nonumber
\end{eqnarray}
This map is not defined in $p_0$ where $f_0(p_0)=f_1(p_0)=0$. Blowing up $p_0$ the map $\phi_1$ defines an elliptic fibration over $\PP^1$, where each fiber is an elliptic curve defined by $\lambda f_0 + \mu f_1=0$ for some $(-\mu:\lambda)\in\PP^1$. Note that the exceptional divisors $E_i$ map onto the $\PP^1$.

Next we have $\text{dim}\ H^0(-2K_S)=4$. The polynomial $f_0^2,f_0 f_1,f_1^2\in H^0(-2K_S)$ therefore it exists a homogeneous polynomial $g$ of degree $6$ such that $H^0(-2K_S)=<f_0^2,f_0f_1,f_1^2,g>$. The map $\phi_2$ is defined by:
\begin{eqnarray}
\phi_2: \quad &S& \dashrightarrow \PP^3\\
&q& \mapsto (X_0:X_1:X_2:X_3)=(f_0(q)^2:f_0(q)f_1(q):f_1(q)^2:g(q))\nonumber
\end{eqnarray}
Recall that $-2K_S=6H-2(E_1+\ldots+E_8)$, thus the sextic curve $g=0$ on $\PP^2$ passes through $p_1,\ldots,p_8$ and it is singular there. Hence the sextic and the cubics $\lambda f_0 + \mu f_1=0$ intersect at $p_1,\ldots,p_8$ with multiplicity $2$. There are two remaining intersection points, but it is not difficult to prove that $p_0$ is not one of them. This means that $g(p_0)\neq 0$ and $\phi_2$ is a morphism. Any fiber of $\phi_1$ is mapped $2:1$ to a $\PP^1$:
\begin{equation}
\phi_2:\ \lambda f_0 + \mu f_1=0\longmapsto
(f_0^2:-\frac{\lambda}{\mu}f_0^2:\frac{\lambda^2}{\mu^2}f_0^2:g)
=(1:-\frac{\lambda}{\mu}:\frac{\lambda^2}{\mu^2}:\frac{g}{f_0^2}).
\end{equation}
Hence $\phi_2$ has degree two onto its image, which is the surface $Q$ in $\PP^3$ of equation $X_0X_2=X_1^2$, a quadric with unique singular point $(0:0:0:1)=\phi_2(p_0)$. The fibers of $\phi_1$ map to the lines passing through the vertex of the cone.

Finally $\text{dim}\ H^0(-3K_S)=7$, thus there is a homogeneous polynomial $h$ of degree $9$ such that $H^0(-3K_S)=<f_0^3,f_0^2f_1,f_0f_1^2,f_1^3,gf_0,gf_1,h>$. The curve $h=0$ in $\PP^2$ has triple points in $p_1,\ldots,p_8$. The map $\phi_3$ is defined by:
\begin{eqnarray}
\phi_3: \quad &S& \dashrightarrow \PP^6\\
&q& \mapsto (f_0^3(q):f_0^2(q)f_1(q):f_0(q)f_1^2(q):f_1^3(q):g(q)f_0(q):g(q)f_1(q):h(q))\nonumber
\end{eqnarray}
It is an embedding which sends each fiber of $\phi_1$ to a smooth cubic in a $\PP^2\subset\PP^6$.

Note that $\text{dim}\ H^0(-6K_S)=22$ but that $H^0(-6K_S)$ contains the $23$ functions $f_0^6,f_0^5f_1,\ldots,f_1^6$, $gf_0^4,\ldots,gf_1^4$, $g^2f_0^2,\ldots,g^2f_1^2$, $g^3$,$hf_0^3,\ldots,hf_1^3$,$h^2$, $f_0gh,f_1gh$. Thus there must be a linear relation among these functions, which is a degree two polynomial in $h$ and its coefficients are polynomials in $f_0,f_1,g$, reflecting the fact that the map defined by $f_0,f_1,g$ is $2:1$. Moreover it can be shown that this relation is unique. Thus the generators $f_0,f_1$ of $H^0(-K_S)$, $g$ of $H^0(-2K_S)$ and $h$ of $H^0(-3K_S)$ define an embedding
\begin{eqnarray}
\label{psiregular}
\Psi: \quad &S& \longrightarrow \PP_{1,1,2,3}\\
&q& \longmapsto (f_0(q):f_1(q):g(q):h(q))\nonumber
\end{eqnarray}
that maps $S$ into a hypersurface of degree six of the weighted projective space.

\subsubsection{The blow up of $\PP^2$ at $8$ base points of the Hesse pencil}
The surface $S$ of theorem \ref{mainth} is defined as the blow up of $\PP^2$ at eight of the base points of the Hesse pencil $p_1,\ldots,p_8$:
\begin{equation}
\pi:\quad S:=Bl_{p1,\ldots,p8}(\PP^2) \longrightarrow \PP^2.
\end{equation}
These points are not in general position because many of them are collinear. Take the strict transform in $S$ of the $8$ lines in the singular fibers of the pencil which do not contain $p_0$.
\begin{eqnarray}
\begin{array}{ccc}
x_1=0:                            & & L_{147}=H-E_1-E_4-E_7, \\
x_2=0:                            & & L_{258}=H-E_2-E_5-E_8, \\
x_0+\omega^2 x_1+\omega x_2=0:    & & L_{345}=H-E_3-E_4-E_5, \\
x_0+\omega x_1+\omega^2 x_2=0:    & & L_{678}=H-E_6-E_7-E_8, \\
x_0+\omega x_1+x_2=0:             & & L_{138}=H-E_1-E_3-E_8, \\
x_0+x_1+\omega x_2=0:             & & L_{246}=H-E_2-E_4-E_6, \\
x_0+\omega^2 x_1+x_2=0:           & & L_{156}=H-E_1-E_5-E_6, \\
x_0+x_1+\omega^2 x_2=0:           & & L_{237}=H-E_2-E_3-E_7.
\end{array}
\end{eqnarray}
The intersection matrix between these curves is given in table \ref{IntPairing}.
\TABLE[h]{
\caption{}\label{IntPairing}
\begin{tabular}{c|cccccccc}
\phantom{x}   &$L_{147}$&$L_{258}$&$L_{345}$&$L_{678}$&$L_{138}$&$L_{246}$&$L_{156}$&$L_{237}$\\
\hline
  $L_{147}$   & -2      & 1       & 0       & 0       & 0       & 0       & 0      & 0       \\
  $L_{258}$   & 1       & -2      & 0       & 0       & 0       & 0       & 0      & 0       \\
  $L_{345}$   & 0       & 0       & -2      & 1       & 0       & 0       & 0      & 0       \\
  $L_{678}$   & 0       & 0       & 1       & -2      & 0       & 0       & 0      & 0       \\
  $L_{138}$   & 0       & 0       & 0       & 0       & -2      & 1       & 0      & 0       \\
  $L_{246}$   & 0       & 0       & 0       & 0       & 1       & -2      & 0      & 0       \\
  $L_{156}$   & 0       & 0       & 0       & 0       & 0       & 0       & -2     & 1       \\
  $L_{237}$   & 0       & 0       & 0       & 0       & 0       & 0       & 1      & -2      \\
\end{tabular}
\label{tab:intpair}
}

As we can see the intersection graph for each pair of curve in the singular fibers is of type $A_2$. Moreover we observe that these curves have zero intersection with the canonical divisor on $S$. Hence $-K_S$ does not satisfy the second requirement of the Nakai-Moishezon criterion and it is not ample. The surface $S$ is a smooth varieties that can be seen as a degenerate limit of del Pezzo surfaces. In particular the anticanonical global sections are constant on the above $L_{ijk}$, therefore such curves get blown down by the anticanonical system of the preceding section. In this case the map (\ref{psiregular}) is a morphism from $S$ to a singular surface of degree six in $\PP_{1,1,2,3}$ which sends the curve $L_{ijk}$ to $4$ singular points of type $A_2$.

\subsubsection{The resolution of $\PP^2/\tilde\Gamma$}
The forms
\begin{eqnarray}
f_0\ &=&\  X Y Z                                                               \cr
f_1\ &=&   X^3 + Y^3 + Z^3                                                     \\
g\   &=&   X^6 + (Y^2 - Y Z + Z^2)^3 + X^3 (2 Y^3 - 3 Y^2 Z - 3 Y Z^2 + 2 Z^3) \cr
h\   &=&  (X^3 + Y^3 + Z^3 + 3 \omega Y^2 Z + 3 \omega^2 Y Z^2)\cdot\cr
     & &  (X^6 + (Y^2 - Y Z + Z^2)^3 + X^3 (2 Y^3 - 3 Y^2 Z - 3 Y Z^2 + 2 Z^3))\nonumber
\end{eqnarray}
define four curves on $\PP^2$. The cubics $f_0=0$ and $f_1=0$ are the ones defining the Hesse pencil, hence they intersect at $p_0,\ldots,p_8$. Their strict transforms in $S$ are divisors linearly equivalent to $-K_S$. The sextic $g=0$ has double points in $p_1,\ldots,p_8$ and it does not contain $p_0$. Therefore it defines a divisor linearly equivalent to $-2K_S$. Finally the curve $h=0$ has triple points at $p_1,\ldots,p_8$ and it does not contain $p_0$. It defines a divisor linearly equivalent to $-3K_S$.

The rational map
\begin{eqnarray}\label{psiregular2}
\overline\Psi: \quad &\PP^2& \dashrightarrow \PP_{1,1,2,3}\\
&q& \mapsto (f_0(q):f_1(q):g(q):h(q))\nonumber
\end{eqnarray}
is not defined in $p_1,\ldots,p_8$, but it gives a morphism
\begin{eqnarray}\label{psiregular3}
\Psi: \quad S \longrightarrow \PP_{1,1,2,3}
\end{eqnarray}
The image of $\Psi$ is the surface of section \ref{c3/d27}:
\begin{equation}
\label{imquasidelP}
w^2+y^3-27wx^3-3wyz+wz^3=0\subset\PP_{1,1,2,3}.
\end{equation}
Note that as it is expected the $(-2)$-curves $L_{ijk}$ on $S$ map to the singular points in \ref{imquasidelP}. We showed that the quotient $\PP^2/\tilde\Gamma$ is isomorphic to \ref{imquasidelP}, therefore we have just proved that such quotient has a desingularization which is a quasi del Pezzo surface $S$. The orbifold $\CC^3/\Gamma$ is isomorphic to the tautological cone over \ref{imquasidelP} in $\CC^4$. It has a desingularization which is the variety $X$ obtained as the total space of the canonical line bundle $K_S$ over $S$, with desingularization map the one from $\Psi$. This complete the proof of theorem \ref{mainth}.

\section{Conclusion}
\label{Conclusion}
In this paper we have studied the geometric properties of the orbifold $X=\CC^3/\Delta_{27}$ and its relations with the geometry of cones over del Pezzo surfaces. The interest in these varieties arised quite recently in the contest of the bottom-up approach to string phenomenology. In a series of paper \cite{Verlinde:2005jr,Buican:2006sn} it has been shown that at low energy an open string theory with a D3-brane placed near the orbifold singularity is identical to the one with a D3-brane near the apex of the cone over del Pezzo surface and that this is a good starting point for a string realization of a Standard Model-like gauge theory.  From a physical point of view this correspondence may be very useful, since, unlike string theory on a general del Pezzo surface, the worldsheet CFT of strings on flat space orbifolds is soluble and the D-brane boundary conditions are exactly known \cite{ALE,Diaconescu:1999dt}.

In our work we use some results contained in \cite{Artebani-Dolgachev} to construct a map between the total space of the canonical bundle of a quasi del Pezzo surface and $X$. This is a morphism that defines a desingularization of the orbifold. As a preliminary step, we have first considered the toric case $Y=\CC^3/{\ZZ_3\times \ZZ_3}$, viewed as a cone over $\PP^2/\ZZ_3$. We have then analyzed a toric resolution of $Y$, realized as the total space of the canonical bundle over a nonsingular quasi del Pezzo surface of degree 3. In this way, the resolution of $Y$ is obtained from the resolution of the quotient space $\PP_2/\ZZ_3$, passing through the (quasi) del Pezzo surfaces in a natural way. We then applied such a viewpoint to the Verlinde-Wijnholt case as summarized in section \ref{sec:intro}.

This work presents many possible developments. Primarily it should provide a useful initial step in the concrete realization of the geometric dual of the minimal quiver extension of the minimal supersymmetric standard model. We recall that in order to obtain this result, Verlinde, Wijnholt and others in \cite{Verlinde:2005jr,Buican:2006sn,Malyshev:2007yb} conceived a cunning symmetry breaking process for the starting quiver theory that gives many indications on the geometric side. In particular it requires the study of monodromies, the research of a partial resolution of the del Pezzo singularity with non isolated $A_2$ singularities and subsequently a particular Calabi-Yau compactification of this local geometry. They present a general prescription to obtain such a goal but the explicit variety, that should determines the actual geometric structure of the hidden dimensions of our world, is still unknown. On such question we observe that, as we showed, on the orbifold side there exist already non isolated $A_2$ singularities. Therefore we suppose that the more natural way to construct the desired partial resolution of del Pezzo singularity should pass through the orbifold geometry and our explicit map between them.

There are several other possible applications. For example we find a natural action of the Heisenberg group on the ``quasi'' del Pezzo geometry side of the theory. This could help to clarify the analysis of \cite{Burrington:2007mj}, where such group has been studied as the automorphisms group of the quiver and in the contest of AdS/CFT correspondence.

However the best known description of D-brane dynamics is in term of derived categories. It should be very interesting to study their properties in this case, also from a mathematical point of view. As a first step one should verify if the geometric correspondence that we find implies the existence of a categorical equivalence suggested by the quiver gauge theories identification. Then one could try to use the knowledge about homological mirror symmetry for del Pezzo surfaces \cite{Orlovealtri} in order to extend the local mirror symmetry conjectures as in \cite{Hosono:2004jp} to the non abelian orbifold case.

\acknowledgments{We are indebted to Bert van Geemen for many explanations and discussions. We are also grateful to Lidia Stoppino and Enrico Schlesinger for very useful discussions and to Stefano Guerra and the anonymous referee for their valuable comments on the original draft of this article. Finally we gratefully acknowledge Giuseppe Berrino, who helped us to improve the exposition of the paper.}

\bibliographystyle{JHEP3}

\end{document}